\definecolor{myurlcolor}{rgb}{0,0,0.7}
\newcommand{\tinyspace}{\mspace{1mu}}
\newcommand{\proj}[1]{| #1\rangle\!\langle #1 |}
\newcommand{\iinner}[2]{\langle #1 | #2\rangle}
\newcommand{\Pa}[1]{\left(#1\right)}
\newcommand{\Br}[1]{\left[#1\right]}
\newcommand{\abs}[1]{\left\lvert\tinyspace #1 \tinyspace\right\rvert}
\newcommand{\norm}[1]{\left\lVert #1 \right\rVert}
\theoremstyle{plain}
\newtheorem{thm}{Theorem}
\newtheorem{prop}[thm]{Proposition}
\newtheorem{cor}[thm]{Corollary}
\newcommand*{\myproofname}{Proof}
\def\cG{\mathcal{G}}
\def\cM{\mathcal{M}}
\def\cN{\mathcal{N}}
\def\sD{\mathscr{D}}
\def\cH{\mathcal{H}}
\def\diff{\mathrm{d}}
\def\md{{(\mathrm{d})}}
\begin{document}

 \author{Kaifeng Bu}
 \email{bkf@zju.edn.cn}
 \affiliation{School of Mathematical Sciences, Zhejiang University, Hangzhou 310027, PR~China}
 \author{Uttam Singh}
 \email{uttamsingh@hri.res.in}
 \affiliation{Harish-Chandra Research Institute, Allahabad, 211019, India}
 \author{Lin Zhang}
 \email{linyz@zju.edu.cn}
 \affiliation{Institute of Mathematics, Hangzhou Dianzi University, Hangzhou 310018, PR China}
  \author{Junde Wu}
 \email{wjd@zju.edn.cn}
 \affiliation{School of Mathematical Sciences, Zhejiang University, Hangzhou 310027, PR~China}

\title{Average distance of random pure states from the maximally entangled and coherent states }

\begin{abstract}
It is well known that random bipartite pure states are typically maximally entangled within an arbitrarily small error. Showing that the marginals of random bipartite pure states are typically extremely close to the maximally mixed state, is a way to prove the above. However, a more direct way to prove the above is to estimate the distance of random bipartite pure states from the set of maximally entangled states. Here, we find the average distance between a random bipartite pure state and the set of maximally entangled states as quantified by three different quantifiers of the distance and investigate the typical properties of the same. We then consider random pure states of a single quantum system and give an account of the typicality of the average $l_1$ norm of coherence for these states scaled by the maximum value of the $l_1$ norm of coherence. We also calculate the variance of the $l_1$ norm of coherence of random pure states to elaborate more on the typical nature of
 the scaled average $l_1$ norm of coherence. Moreover, We compute the distance of a random pure state from the set of maximally coherent states and obtain the average distance.
\end{abstract}

\maketitle

% %================================%
\section{Introduction}
% %================================%
The typicality of various quantities such as the quantum entanglement \cite{Hayden2004, Hayden2006, Collins2015} and the diagonal entropy (the entropy of the diagonal part of a density matrix in a fixed basis) \cite{Giroad2016} has been proved very fruitful. In particular, the existence of typicality helps in the reduction of computational complexity of these quantities which is extremely relevant for the systems having higher dimensional Hilbert spaces. Also, the typicality of quantum entanglement of random bipartite pure states sampled from the uniform Haar measure provides a satisfactory explanation to the {\it equal a priori} postulate of the statistical physics \cite{Popescu2006, Goldstein2006}. Recently, in the context of metrology \cite{Giovannetti2004, Giovannetti2006, Giovannetti2011, Toth2014}, it has been proved that random pure states sampled from the uniform Haar measure typically do not lead to the super-classical scaling of precision, however, random states from the symmetric subspace typically lead to the optimal Heisenberg scaling \cite{Oszmaniec2016}. The key ingredient in the proof of all these results is the {\it concentration of measure phenomenon}, in particular L\'evy's lemma \cite{Ledoux2005}.
The remarkable result that random bipartite pure states are typically maximally entangled within vanishingly small error has been obtained by proving that the reduced density matrices corresponding to random bipartite pure states are typically very close to the maximally mixed state. However, here we present a direct way to investigate this feature of random bipartite pure states by finding the distance of Haar distributed bipartite pure states from the set of maximally entangled states. This results in various finer details of the problem of the typicality of quantum entanglement. We also find a complementarity type relationship between this distance and the negativity--a computable measure of entanglement--of Haar distributed bipartite pure states and further delineate the typicality of negativity.

Recent advances in the fields of quantum thermodynamics \cite{Aspuru13, Horodecki2013, Skrzypczyk2014, Narasimhachar2015, Brandao2015, Rudolph214, Rudolph114, Piotr2015, Gardas2015, Avijit2015, Goold2015} and quantum biology
\cite{Plenio2008,Levi14,Aspuru2009, Lloyd2011, Li2012, Huelga13} suggest the crucial role of quantum superposition in various physical processes and have led to the development of various resource theories of quantum coherence \cite{Gour2008, Marvian14, Baumgratz2014, AlexB2015, Chitambar2016}. This has ignited a great deal of interest towards proposing various measures of quantum coherence\cite{Baumgratz2014, Napoli2016, Piani2016, Killoran2015,Girolami14, Chitambar2015, StreltsovA2015, ChitambarA2015, Peng2015}, establishing its connection to the other characteristic traits of quantum physics such as entanglement \cite{Alex15} and providing operational meaning to the quantifiers of coherence \cite{Winter2015, UttamA2015}. In the spirit of finding the typical coherence content of random quantum states, the average relative entropy of coherence of random pure and mixed quantum states sampled from various probability measures has been calculated and shown to be typical \cite{UttamS2015, ZhangLin2015}. The immense importance of these results has been shown explicitly in estimating analytically  the typical entanglement content of a specific class of random bipartite mixed states, which is an extremely difficult task for higher dimensional quantum systems \cite{ZhangLin2015}. However, not much is known of the typicality of the $l_1$ norm of coherence, which is one of the computable measures of quantum coherence \cite{Baumgratz2014}, of the Haar-distributed pure quantum states.
In this work,
% \red{we investigate the distance between a Haar distributed pure state from the maximally entangled and coherent state. First we show the typicality of negativity and the distance between a random pure state from maximally coherent state in three different distance measure.}
we calculate the expected value and the variance of $l_1$ norm of coherence of Haar distributed pure sates of a single quantum system and show that the average $l_1$ norm of coherence does not show the typicality, however, the average $l_1$ norm of coherence scaled by its maximum value does show the typicality. In particular, we find that the ratio of the average $l_1$ norm of coherence and its maximum value for most of the Haar distributed pure quantum states is concentrated around the fixed value $\pi/4$. Furthermore, similar to the entanglement case, we investigate the average distance between a random pure state and the set of the maximally coherent states with respect to  three different distance measures.

The paper is organized as follows. We start with giving an exposition of the quantum coherence and other preliminaries in Sec. \ref{sec:prelims}. In Sec. \ref{sec:aver_ent} we prove the concentration of negativity and calculate the average distance between random pure states and the set of maximally entangled states employing three different distance measures. We calculate the expected value and variance of the $l_1$ norm of coherence for Haar distributed pure quantum states in Sec. \ref{sec:averl1}. In Sec. \ref{sec:aver_dis} we find the average value of the distance between Haar distributed pure quantum states and the set of maximally coherent states. Furthermore, we calculate the average value of the $\alpha$-classical purity and show its typicality  in  Sec. \ref{aver_puri}. Finally, we conclude in Sec. \ref{sec:con} with a discussion on the results presented in this work.

\section{Preliminaries}
\label{sec:prelims}
Here we present the relevant basic tools and concepts that are required for presenting our main results.

\noindent
{\it Quantum coherence.--} In the pursuit of finding natural restricted class of operations (allowed operations) for an operationally motivated resource theory of coherence, various resource theories of coherence have been developed very recently \cite{Gour2008, Marvian14, Baumgratz2014, AlexB2015, Chitambar2016}. We consider here a measure of coherence, namely, the $l_1$ norm of coherence introduced in the resource theory of coherence based on the incoherent operations \cite{Baumgratz2014}. For a $d$ dimensional quantum system in a state $\rho$ and a fixed reference basis $\{\ket{i}\}$, the $l_1$ norm of coherence $C_{l_1}(\rho)$ is defined as
\begin{align}
C_{l_1}(\rho) := \sum_{\substack{{i,j=1}\\{i\neq j}}}^d |\bra{i}\rho\ket{j}|.
\end{align}
We emphasize that the notion of coherence is intrinsically basis dependent and applicable to finite dimensional systems.

%\smallskip
%\noindent
%{\it Negativity.--}
%A computable entanglement measure of state $\rho_{AB}$ is called negativity \cite{Vidal2002}, which for a bipartite state $\rho_{AB}$, across the bipartition $A:B$, is defined as
%\begin{eqnarray}
%\cN(\rho_{AB})=\frac{\norm{\rho^{T_B}_{AB}}_1-1}{2},
%\end{eqnarray}
%where $T_{B}$ denotes the  partial transpose with respect to subsystem $B$.

\smallskip
\noindent
{\it L\'evy's lemma (see \cite{Ledoux2005} and \cite{Hayden2006}).--}
Let $\cG:\mathbb{S}^k\to \mathbb{R}$ be a Lipschitz continuous function from the $k$-sphere to the real line with a Lipschitz constant $\eta$ (with respect to the Euclidean norm). Let a point $z\in\mathbb{S}^k$ is chosen uniformly at random. Then for any $\varepsilon>0$,
\begin{eqnarray}
\label{eq:levy-lemma}
\mathrm{Pr}\set{|\cG(z)-\mathbb{E}\cG|>\varepsilon} \leq 2\exp\Pa{-\frac{(k+1)\varepsilon^2}{9\pi^3\eta^2\ln2}},
\end{eqnarray}
where $\mathbb{E}\cG$ is the expected value of $\cG$.

\noindent
\smallskip
{\it Measures of distance (see \cite{Wilde13}).--} We consider three different distance measures, namely, the trace distance, the Hilbert-Schmidt distance and the Bures distance for our purposes. (1) The trace distance between two quantum states $\rho$ and $\sigma$ is defined as
\begin{eqnarray}
\label{tr-dis}
||\rho-\sigma||_1:=\mathrm{Tr}\left[\sqrt{(\rho-\sigma)^2}\right].
\end{eqnarray}
(2) The Hilbert-Schmidt distance between two quantum states $\rho$ and $\sigma$, induced by the
Hilbert-Schmidt scalar product, is defined as
\begin{eqnarray}
\label{hs-dis}
||\rho-\sigma||_2:=\sqrt{\mathrm{Tr}\left[(\rho-\sigma)^2\right]}.
\end{eqnarray}
(3) The Bures distance between two quantum states $\rho$ and $\sigma$ is defined as
\begin{eqnarray}
\label{b-dis}
D_\mathrm{B}(\rho,\sigma)=\sqrt{2\left(1-\mathrm{Tr}\left[(\rho^{1/2}\sigma\rho^{1/2})^{1/2}\right]\right)}.
\end{eqnarray}

\smallskip
\noindent
{\it The Dirichlet integral.--} The Dirichlet distribution \cite{Frigyik2010} of order $N$ with parameters $\vec{\alpha}=(\alpha_1,\alpha_2,...,\alpha_N)$, denoted by $\mathrm{Dir}(\vec{\alpha})$, has a probability density function
\begin{eqnarray}
p(x_1,...,x_N;\vec{\alpha})=\frac{1}{C(\alpha)}\prod^N_{i=1}x^{\alpha_i-1}_i
\end{eqnarray}
on the simplex
\begin{eqnarray*}
\Delta_{N-1}=\set{(x_1,...,x_N)|\sum^N_{i}x_i=1, x_i\geq 0~ \mathrm{for}~1\leq i\leq N }.
\end{eqnarray*}
The normalization constant $C(\alpha)=\frac{\prod^N_{i=1}\Gamma(\alpha_i)}{\Gamma\left(\sum^N_{i=1}\alpha_i\right)}$ comes from the Dirichlet integral \cite{Andrews1999} on the simplex $\Delta_{N-1}$, which is given by
\begin{eqnarray}
\int_{\Delta_{N-1}}x^{\alpha_1-1}
\cdots x^{\alpha_N-1}_N
dx_1\cdots dx_N
=\frac{\prod^N_{i=1}\Gamma(\alpha_i)}{\Gamma\left(\sum^N_{i=1}\alpha_i\right)}.
\end{eqnarray}

\section{Distance of a random pure state from the set of maximally entangled states }
\label{sec:aver_ent}
Consider a bipartite quantum system $\cH_A\otimes \cH_B$ ($\mathrm{dim} \cH_A=\mathrm{dim} \cH_B=N$) in a state $\ket{\psi}_{AB}$. The state admits the Schmidt decomposition as $\ket{\psi}_{AB}=\sum^N_{i=1}\sqrt{\lambda_i}\ket{\phi_i}_A\otimes\ket{\varphi_i}_B$. If all $\lambda_i=\frac{1}{N}$, then such a state is called maximally entangled state. Denote by $\cM_E$, the set of all maximally entangled pure states. Now, we consider the distance between a pure state $\psi_{AB}=\proj{\psi_{AB}}$ and the set $\cM_E$ of the maximally entanglement states. Using the distance measures, given in Eqs. \eqref{tr-dis}, \eqref{hs-dis} and \eqref{b-dis}, we have
\begin{align}
\label{eq:tr-dis1}
\sD_{\mathrm{Tr}}(\psi_{AB},\cM_E)&:=\inf_{\phi_{AB}\in\cM_E}||\psi_{AB}-\phi_{AB}||_1\nonumber\\
&=2\sqrt{1-\frac{1}{N}\left(\sum^N_{i=1}\sqrt{\lambda_i}\right)^2}.
\end{align}
\begin{align}
\label{eq:hs-dis1}
\sD_{\mathrm{HS}}(\psi_{AB},\cM_E)&:=\inf_{\phi_{AB}\in\cM_E}||\psi_{AB}-\phi_{AB}||_2\nonumber\\
&=\sqrt{2}\sqrt{1-\frac{1}{N}\left(\sum^N_{i=1}\sqrt{\lambda_i}\right)^2}\nonumber\\
&=\frac{1}{\sqrt{2}}\sD_{Tr}(\psi_{AB},\cM_E).
\end{align}
\begin{align}
\label{eq:b-dis1}
\sD_\mathrm{B}(\psi_{AB},\cM_E)&:=\inf_{\phi_{AB}\in\cM_E}D_B(\psi_{AB},\phi_{AB})\nonumber\\
&=\sqrt{2}\sqrt{1-\frac{1}{\sqrt{N}}\left(\sum^N_{i=1}\sqrt{\lambda_i}\right)}.
\end{align}

The above distances show complementarity with a computable measure of entanglement, namely, the negativity \cite{Vidal2002}. For a bipartite state $\rho_{AB}$, the negativity across the bipartition $A:B$ is defined as
\begin{eqnarray}
\cN(\rho_{AB})=\frac{\norm{\rho^{T_B}_{AB}}_1-1}{2},
\end{eqnarray}
where $T_{B}$ denotes the  partial transpose with respect to the subsystem $B$.
For a pure state $\ket{\psi}_{AB}=\sum^N_{i=1}\sqrt{\lambda_i}\ket{\phi_i}_A\otimes\ket{\varphi_i}_B$, the negativity is given by
\begin{eqnarray}
\cN(\psi_{AB})=\frac{\Pa{\sum^N_{i=1}\sqrt{\lambda_i}}^2-1}{2}.
\end{eqnarray}
The maximum value of negativity, denoted by $\cN_{max}$, is equal to $\frac{N-1}{2}$. Now, we have (see also \cite{Puchala2015})
\begin{eqnarray}
\label{dual:3}N\frac{\sD^2_{\mathrm{Tr}}(\psi_{AB},\cM_E)}{8}+\cN(\psi_{AB})=\cN_{max};\\
\label{dual:4}N\frac{\sD^2_{\mathrm{HS}}(\psi_{AB},\cM_E)}{4}+\cN(\psi_{AB})=\cN_{max}.
\end{eqnarray}
The meaning of the complementary relations (Eqs. \eqref{dual:3} and \eqref{dual:4}) is easy to grasp: the larger the negativity of a state $\ket{\psi}_{AB}$, the closer it is to the maximally entangled state and vice versa. Next we discuss the concentration of measure phenomenon for the negativity of random bipartite pure quantum states.
\smallskip
\subsection{Concentration of the negativity around its average value}
The average value of negativity $\mathbb{E}_{\psi_{AB}}\cN=\int  \cN(\psi_{AB})d \psi_{AB}$ over Haar distributed pure quantum states has been calculated in Ref. \cite{Datta2010} and it is demonstrated numerically that in the limit $N\to\infty$ it is a constant multiple of the maximal negativity, i.e.,
 \begin{eqnarray}\label{eq:av_neg}
 \mathbb{E}_{\psi_{AB}}\cN\sim 0.72037 \cN_{max}.
 \end{eqnarray}
Moreover, in Ref. \cite{Datta2010}, the numerical evidence of concentration of (scaled) negativity around its average value is provided. Here we give an analytical proof of the concentration of (scaled) negativity based on the concentration of measure phenomenon \cite{Ledoux2005}. But before we proceed further, we find the Lipschitz constant for the negativity in the following.
%==========================================
\begin{prop}\label{prop:N1}
Let $\ket{\psi}_{AB}\in \cH_A\otimes \cH_B$ with $\mathrm{dim} \cH_{A}=N=\mathrm{dim} \cH_{B}$. Then, the Lipschitz constant for the function $\cN:\ket{\psi}_{AB}\to \cN(\psi_{AB})$ is less than $N/2\sqrt{2}$, i.e., for two states $\ket{\psi}_{AB}\in \cH_A\otimes \cH_B$ and $\ket{\phi}_{AB}\in \cH_A\otimes \cH_B$
\begin{eqnarray}\label{eq:lps2}
|\cN(\psi_{AB})-\cN(\phi_{AB})|\leq \frac{N}{2\sqrt{2}}||\psi_{AB}-\phi_{AB}||_2.
\end{eqnarray}
\end{prop}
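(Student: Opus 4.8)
The plan is to express the negativity as $\cN(\psi_{AB})=\tfrac12\Pa{f^2-1}$, where $f(\psi_{AB}):=\sum_{i=1}^N\sqrt{\lambda_i}=\trace\sqrt{\rho_A}$ with $\rho_A=\trace_B\proj{\psi}$, and then to read off the Lipschitz constant as the supremum over pure states of the gradient norm of $\cN$ with respect to $\norm{\cdot}_2$. A direct attempt through the trace norm, combining $\abs{\cN(\psi_{AB})-\cN(\phi_{AB})}=\tfrac12\abs{\norm{\psi_{AB}^{T_B}}_1-\norm{\phi_{AB}^{T_B}}_1}\le\tfrac12\norm{(\psi_{AB}-\phi_{AB})^{T_B}}_1$ with $\norm{X^{T_B}}_1\le N\norm{X^{T_B}}_2=N\norm{X}_2$ (the rank is at most $N^2$ and the partial transpose is an HS isometry), only yields the weaker constant $N/2$: the factor $1/\sqrt2$ is discarded by the triangle inequality for trace norms, which is lossy precisely when $\psi_{AB}$ and $\phi_{AB}$ have nearly equal Schmidt spectra. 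To recover the sharp $N/2\sqrt2$ I would instead estimate the differential of $\cN$ directly.

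The key step is the differential of $f$. Fix $\ket{\psi}=\sum_i\sqrt{\lambda_i}\ket{i}_A\ket{i}_B$ in its Schmidt basis and perturb by $\ket{\delta}=\sum_{jk}\delta_{jk}\ket{j}_A\ket{k}_B$. First-order (nondegenerate) eigenvalue perturbation of $\rho_A$ gives $\delta\lambda_i=2\sqrt{\lambda_i}\,\re\delta_{ii}$, so that
\begin{equation}
\delta f=\sum_i\frac{\delta\lambda_i}{2\sqrt{\lambda_i}}=\re\sum_i\delta_{ii}.
\end{equation}
The crucial observation is that the factor $\sqrt{\lambda_i}$ in $\delta\lambda_i$ cancels the potentially singular $1/\sqrt{\lambda_i}$, so that $\delta f$, and hence $\delta\cN=f\,\delta f=f\,\re\sum_i\delta_{ii}$, remains bounded even where Schmidt coefficients vanish; this is what makes $\cN$ genuinely Lipschitz despite the square roots.

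It then remains to optimize. Physical (tangent) perturbations obey $\iinner{\psi}{\delta}=\sum_i\sqrt{\lambda_i}\,\delta_{ii}=0$, and maximizing $\abs{\re\sum_i\delta_{ii}}$ over unit vectors $\ket{\delta}$ subject to this constraint is a projection/Cauchy--Schwarz computation: it equals the norm of the component of $(1,\dots,1)$ orthogonal to $(\sqrt{\lambda_1},\dots,\sqrt{\lambda_N})$, namely $\sqrt{N-f^2}$, using $\sum_i\lambda_i=1$ and $\sum_i\sqrt{\lambda_i}=f$. Hence the gradient norm of $\cN$ in the vector metric is $f\sqrt{N-f^2}$, which over the admissible range $f\in[1,\sqrt N]$ is maximized at $f^2=N/2$ with value $N/2$. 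Finally, since $\iinner{\psi}{\delta}=0$ implies $\norm{\,\ket{\delta}\bra{\psi}+\ket{\psi}\bra{\delta}\,}_2=\sqrt2\,\norm{\delta}$, the HS metric on density matrices is $\sqrt2$ times the vector metric, so the gradient of $\cN$ with respect to $\norm{\cdot}_2$ is bounded by $\tfrac{1}{\sqrt2}\cdot\tfrac N2=\tfrac{N}{2\sqrt2}$. Integrating this bound along the path joining $\psi_{AB}$ and $\phi_{AB}$ gives \eqref{eq:lps2}.

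The main obstacle is making the perturbation argument rigorous where $\rho_A$ is degenerate or rank-deficient, since there $\sqrt{\lambda_i}$ fails to be differentiable at a vanishing $\lambda_i$; I would establish the bound first on the dense open set of states with distinct positive Schmidt coefficients (where the clean formula $\delta f=\re\sum_i\delta_{ii}$ is valid) and then extend it to all states by continuity of $\cN$. A secondary point is that the gradient estimate controls the \emph{geodesic} Lipschitz constant whereas \eqref{eq:lps2} is stated in the chordal distance $\norm{\psi_{AB}-\phi_{AB}}_2$; because the extremal ratio $N/2\sqrt2$ is attained only in the infinitesimal limit (at $f=\sqrt{N/2}$), the chordal inequality holds with the same constant.
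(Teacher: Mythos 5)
Your gradient computation is correct, and the route through $f(\psi_{AB})=\trace\sqrt{\rho_A}$, $\cN=\tfrac12\Pa{f^2-1}$ is sound, but the last step has a genuine gap: a pointwise gradient bound is a Lipschitz bound with respect to \emph{geodesic} distance on projective space, and this does not imply the same constant with respect to the \emph{chordal} distance $\norm{\psi_{AB}-\phi_{AB}}_2$, which is what Eq.~\eqref{eq:lps2} asserts. Writing $\cos\theta=\abs{\iinner{\psi_{AB}}{\phi_{AB}}}$, integrating your infinitesimal bound along the Fubini--Study geodesic gives only $\abs{\cN(\psi_{AB})-\cN(\phi_{AB})}\le \tfrac{N}{2\sqrt2}\cdot\sqrt2\,\theta$, whereas the claim requires $\tfrac{N}{2\sqrt2}\cdot\sqrt2\sin\theta$; the discrepancy $\theta/\sin\theta$ can be as large as $\pi/2$. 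Your justification --- that the extremal gradient is attained only at $f^2=N/2$ --- is a heuristic, not a proof: a gradient bound alone can never control the chordal constant (on a circle, the geodesic distance from a fixed point has gradient norm $1$ everywhere, yet its chordal Lipschitz constant is $\pi/2$), so one must exploit how the bound $f\sqrt{N-f^2}$ varies along the path. This can be done, and it salvages your approach: apply your gradient estimate to $f$ itself, namely $\abs{\dot f}\le\sqrt{N-f^2}$ along a unit-speed geodesic, integrate the comparison ODE to get $f(t)\le\sqrt{N}\sin(t+c)$ with $\sin c=f(0)/\sqrt{N}$, and use $\sin^2(\theta+c)-\sin^2 c=\sin\theta\,\sin(\theta+2c)\le\sin\theta$ to conclude $\abs{f(\theta)^2-f(0)^2}\le N\sin\theta$; combined with $\norm{\psi_{AB}-\phi_{AB}}_2=\sqrt{2}\sin\theta$ and $\cN=\tfrac12\Pa{f^2-1}$ this yields exactly \eqref{eq:lps2}. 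Without this (or an equivalent) argument, your proof establishes only the weaker constant $\tfrac{\pi}{2}\cdot\tfrac{N}{2\sqrt2}$.

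For contrast, the paper's proof is global and bypasses differential geometry entirely: it writes the negativity via the fidelity $F(\rho_A,\I/N)=\tfrac{1}{\sqrt N}\trace\sqrt{\rho_A}$, invokes the inequality $\abs{F(\rho,\tau)^2-F(\sigma,\tau)^2}\le\sqrt{1-F(\rho,\sigma)^2}$, and then uses Uhlmann's theorem together with Vidal's lemma, $\max_U F(\rho_A,U\sigma_A U^\dagger)=\sum_i\sqrt{\lambda_i^{\downarrow}\mu_i^{\downarrow}}\ge\abs{\iinner{\psi_{AB}}{\phi_{AB}}}$, to pass directly to the overlap, after which $\norm{\psi_{AB}-\phi_{AB}}_2=\sqrt{2-2\abs{\iinner{\psi_{AB}}{\phi_{AB}}}^2}$ gives the chordal bound in one stroke, for all pairs of states and with no genericity assumptions. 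A secondary, fixable issue in your write-up: your density-plus-continuity plan for degenerate or rank-deficient Schmidt spectra suffices for a two-point chordal statement, but your geodesic integration additionally needs the bound to hold along the whole path (or the path to avoid the degenerate set), which should be addressed explicitly.
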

\begin{proof}
%=========================================
Suppose the Schmidt coefficients of $\psi_{AB}=\proj{\psi_{AB}}$ and $\phi_{AB}=\proj{\phi_{AB}}$ are $\set{\lambda_i}^N_{i=1}$ and $\set{\mu_i}^N_{i=1}$,
respectively.
Based on the definition of negativity, we have
\begin{align*}
&\left|\cN(\psi_{AB})-\cN(\phi_{AB})\right|\\
=&\frac{1}{2}\left|[\mathrm{Tr}\sqrt{\rho_A}]^2-[\mathrm{Tr}\sqrt{\sigma_A}]^2\right|\\
=&\frac{1}{2}\left|[\mathrm{Tr}\sqrt{\rho_A}]^2-[\mathrm{Tr}\sqrt{U\sigma_AU^\dag}]^2\right|,
\end{align*}
where $\rho_A$ and $\sigma_A$ are the reduced states of $\psi_{AB}$ and $\phi_{AB}$, respectively, and $U$ is a unitary operator. Note that $F(\rho, \mathbb{I}/N)=\frac{1}{\sqrt{N}}\mathrm{Tr}(\sqrt{\rho})$, where $F(\rho, \sigma):=\mathrm{Tr}(\sqrt{\sqrt{\rho}\sigma \sqrt{\rho}})$ is the fidelity between states $\rho$ and $\sigma$. This implies that
\begin{align}
&|\cN(\psi_{AB})-\cN(\phi_{AB})|\nonumber\\
=&\frac{N}{2}\left|F(\rho_A, \mathbb{I}/N)^2-F(U\sigma_AU^\dag, \mathbb{I}/N)^2\right|\nonumber\\
\label{eq:neg1}\leq& \frac{N}{2}\sqrt{1-F(\rho_A, U\sigma_AU^\dag)^2},
\end{align}
where the last inequality follows from the fact that \cite{Rastegin2003,Zhang2015qip}
\begin{align*}
\left|F(\rho, \tau)^2-F(\sigma, \tau)^2\right|
\leq \sqrt{1-F(\rho, \sigma)^2},
\end{align*}
for any states $\rho$, $\sigma$ and $\tau$.
From Lemma 1 of Ref. \cite{Vidal2000b} and Uhlmann's theorem \cite{Uhlmann1976,Jozsa1994}, we have
\begin{eqnarray*}
\max_{U}F(\rho_A, U\sigma_AU^\dag)=\sum^N_{i=1}\sqrt{\lambda^{\downarrow}_i\mu^{\downarrow}_i},\\
|\iinner{\psi_{AB}}{\phi_{AB}}|\leq \sum^N_{i=1}\sqrt{\lambda^{\downarrow}_i\mu^{\downarrow}_i},
\end{eqnarray*}
where $\lambda^{\downarrow}_i$ and $\mu^{\downarrow}_i$ denote that
$\set{\lambda_i}^N_{i=1}$ and $\set{\mu_i}^N_{i=1}$ are listed in the decreasing order.
Thus, there exists a unitary operator $U$ such that
$F(\rho_A, U\sigma_AU^\dag)\geq |\iinner{\psi_{AB}}{\phi_{AB}}|$.
Therefore,
using  Eq. \eqref{eq:neg1} and the fact that $||\psi_{AB}-\phi_{AB}||_2=\sqrt{2-2|\iinner{\psi_{AB}}{\phi_{AB}}|^2}$, we have
\begin{eqnarray}
|\cN(\psi_{AB})-\cN(\phi_{AB})|\leq \frac{N}{2\sqrt{2}}||\psi_{AB}-\phi_{AB}||_2.
\end{eqnarray}
This completes the proof of the proposition.
\end{proof}
Now, applying L\'evy's lemma (see Eq. \eqref{eq:levy-lemma}) to the function $\cN^{s}:\ket{\psi}_{AB}\to \cN(\psi_{AB})/\cN_{max}:=\cN^{s}(\psi_{AB})$ and using the above proposition, we have
\begin{align}
&\mathrm{Pr} \Set{\left|\cN^{s}(\psi_{AB})-\mathbb{E}_{\psi_{AB}}\cN^{s} \right|>\varepsilon}\nonumber\\
&~~~~~~~~~~~~~~~~~~~~~~~~~~~~~~~~~\leq  2\exp\Pa{-\frac{16(N-1)^2\varepsilon^2}{9\pi^3\ln2}}.
\end{align}
The above equation is a statement of concentration of (scaled) negativity around its average value. Together with Eq. \eqref{eq:av_neg} we have that for large values of $N$, most of the random bipartite pure states have (scaled) negativity equal to $0.72037$ within an arbitrarily small error. Next, we compute the average values of distances given by Eqs. \eqref{eq:tr-dis1}, \eqref{eq:hs-dis1} and \eqref{eq:b-dis1}.

\subsection{Typicality of the average values of \texorpdfstring{$\sD^2_{\mathrm{Tr}}(\psi_{AB},\cM_E)$, $\sD^2_{\mathrm{HS}}(\psi_{AB},\cM_E)$ and $\sD^2_\mathrm{B}(\psi_{AB},\cM_E)$}{}}
The average values of the distances $\mathbb{E}_{\psi_{AB}}\sD^2_{\mathrm{Tr}}(\psi_{AB},\cM_E)$ and $\mathbb{E}_{\psi_{AB}}\sD^2_{\mathrm{HS}}(\psi_{AB},\cM_E)$ can be obtained by integrating the complementarity relations \eqref{dual:3} and \eqref{dual:4}, respectively. Now using Eq. \eqref{eq:av_neg}, we have
\begin{eqnarray*}
\mathbb{E}_{\psi_{AB}}\sD^2_{\mathrm{Tr}}(\psi_{AB},\cM_E)\sim  1.1185,\\
\mathbb{E}_{\psi_{AB}}\sD^2_{\mathrm{HS}}(\psi_{AB},\cM_E)\sim   0.5593.
\end{eqnarray*}
To establish typicality of $\mathbb{E}_{\psi_{AB}}\sD^2_{\mathrm{Tr}}(\psi_{AB},\cM_E)$ and $\mathbb{E}_{\psi_{AB}}\sD^2_{\mathrm{HS}}(\psi_{AB},\cM_E)$, we need Lipschitz constants for the functions $\cG_1:\ket{\psi}_{AB}\mapsto \sD^2_{\mathrm{Tr}}(\psi_{AB},\cM_E):=\cG_1(\psi_{AB})$ and $\cG_2:\ket{\psi}_{AB}\mapsto \sD^2_{\mathrm{HS}}(\psi_{AB},\cM_E):=\cG_2(\psi_{AB})$, respectively. Let us consider two bipartite pure states $\ket{\psi}_{AB}$ and $\ket{\phi}_{AB}$. Note that
\begin{align}
|\cG_1(\psi_{AB})-\cG_1(\phi_{AB})| &= \frac{8}{N}|\cN(\psi_{AB})-\cN(\phi_{AB})|\nonumber\\
&\leq 2\sqrt{2}||\psi_{AB}-\phi_{AB}||_2.
\end{align}
Here in first line, we have used the complementarity relation given by Eq. \eqref{dual:3} and in the second line we have used Proposition \ref{prop:N1}. Similarly,
\begin{align}
|\cG_2(\psi_{AB})-\cG_2(\phi_{AB})| &= \frac{4}{N}|\cN(\psi_{AB})-\cN(\phi_{AB})|\nonumber\\
&\leq \sqrt{2}||\psi_{AB}-\phi_{AB}||_2.
\end{align}
Having found the Lipschitz constants for the functions $\cG_1(\psi_{AB})$ and $\cG_2(\psi_{AB})$, we apply L\'evy's lemma to the functions $\cG_1(\psi_{AB})$ and $\cG_2(\psi_{AB})$. This establishes that $\sD^2_{\mathrm{Tr}}(\psi_{AB},\cM_E)$ and $\sD^2_{\mathrm{HS}}(\psi_{AB},\cM_E)$ concentrate around their respective average values as $N\to\infty$. As it is a proven fact that Haar distributed bipartite states are maximally entangled states \cite{Hayden2004}, it is a bit surprising that in $N\to\infty$ most of the states maintain a finite nonzero distance from the set of maximally entangled states in contrast to our expectations that most states should have this distance very close to zero. However, the point that in $N\to\infty$ most of the states maintain a finite nonzero distance from the set of maximally entangled state makes sense in the view that for most of the random bipartite pure states (scaled) negativity doesn't tend to one. This may mean that the distances cons
 idered i
 n this work explains the entanglement features as seen by the negativity.

Now, we compute the expected value of $\sD^2_\mathrm{B}(\psi_{AB},\cM_E)$. The joint distribution of Schmidt coefficients $\Lambda=(\lambda_1,...,\lambda_N)$ for a bipartite system with subsystems being of equal dimensions is given by \cite{Lloyd1988}
\begin{eqnarray*}
P(\mathbf{\Lambda})d\mathbf{\Lambda}=C \delta(1-\sum^N_{i=1}\lambda_i)
\prod_{1\leq i<j\leq N}(\lambda_i-\lambda_j)^2\prod^N_{k=1}d\lambda_k,
\end{eqnarray*}
where $C$ is the normalization constant, $\delta$ is the Dirac delta
function and $d\mathbf{\Lambda}=\prod^N_{k=1}d\lambda_k$.
Since $\sD^2_\mathrm{B}(\psi_{AB},\cM_E)=2\Br{1-\frac{1}{\sqrt{N}}(\sum^N_{i=1}\sqrt{\lambda_i})}$, we only need to calculate
\begin{eqnarray}
\int \sum^N_{i=1}\sqrt{\lambda_i}P(\mathbf{\Lambda})d\mathbf{\Lambda}.
\end{eqnarray}
We introduce new variables $q_i=t\lambda_i$ such that \cite{Datta2010,Sen1996,Scott2003}
\begin{eqnarray}\label{eq:Q}
Q(\mathbf{q})d\mathbf{q}&\equiv&\prod_{1\leq i< j\leq N} (q_i-q_j)^2
\prod^N_{m=1}e^{-q_m}dq_{m}\nonumber\\
&=&\overline{C} e^{-t}t^{N^2-1}P(\mathbf{\Lambda})d\mathbf{\Lambda}dt,
\end{eqnarray}
where $q_i\in[0,\infty)$ and $t=\sum^N_{i=1}q_i$. Integrating both sides of Eq. \eqref{eq:Q}, we obtain that the normalization constant $\overline{C}$ is equal to $\overline{Q}/\Gamma(N^2)$,  where $\overline{Q}\equiv \int Q(\mathbf{q})d\mathbf{q}=N!\prod^N_{m=1}
\Gamma(m)^2$  \cite{Datta2010,Scott2003}. Thus,
\begin{align}\label{eq:QeqivP}
\int\sum^N_{i=1}\sqrt{q_i}Q(\mathbf{q})d\mathbf{q}
=\overline{Q}\frac{\Gamma(N^2+1/2)}{\Gamma(N^2)}
\int\sum^N_{i=1}\sqrt{\lambda_i}P(\mathbf{\Lambda})d\mathbf{\Lambda}.
\end{align}
The product $\prod_{1\leq i< j\leq N} (q_i-q_j)^2$ is the square of the Van der Monde determinant \cite{Datta2010,Sen1996,Scott2003}
\begin{gather}\label{eq:det}
\prod_{1\leq i<j\leq N}(q_i-q_j)
=\left|
\begin{array}{cccc}
1&\cdots &1\\
q_1 & \cdots & q_N\\
\vdots & \ddots & \vdots\\
q^{N-1}_1 & \cdots & q^{N-1}_N
\end{array}
\right|\nonumber
\\
\label{eq:det}=c_N\left|
\begin{array}{cccc}
\Gamma(1)L_0(q_1)&\cdots &\Gamma(1)L_0(q_N)\\
\Gamma(2)L_1(q_1) & \cdots &\Gamma(2)L_1(q_N)\\
\vdots & \ddots & \vdots\\
\Gamma(N)L_{N-1}(q_1) & \cdots & \Gamma(N)L_{N-1}(q_N)
\end{array}
\right|.
\end{gather}
The second determinant in Eq. \eqref{eq:det} is due to the fact that the determinant does not change when we add a multiple of one row to another. The factor $c_N=\pm 1$  depends on $N$. The polynomials $L_k(x)$ are Laguerre polynomials \cite{Andrews1999}, defined as
\begin{eqnarray}
L_k(x):=\frac{e^x}{k!}\frac{d^k}{dx^k}(e^{-x}x^k), ~for~k\geq 0.
\end{eqnarray}
The Laguerre polynomials satisfy
\begin{eqnarray}
\int^\infty_0 dx e^{-x}L_i(x)L_j(x)=\delta_{ij}.
\end{eqnarray}

Now, we are ready to compute the following integral:
\begin{eqnarray*}
&&\int \sqrt{q_i}Q(\mathbf{q})d\mathbf{q}\\
&=&\int \sqrt{q_i}\prod_{1\leq i< j\leq N} (q_i-q_j)^2\prod^N_{m=1}dq_m\\
&=&\sum_{\sigma,\tau\in S_N}(-1)^{sgn(\sigma)+sgn(\tau)}\\
&\times& \Gamma(\sigma(i))\Gamma(\tau(i))\int dq_i \sqrt{q_i}e^{-q_i}L_{\sigma(i)-1}(q_i)L_{\tau(i)-1}(q_i)\\
&\times&\prod^{N}_{m\neq i}\Gamma(\sigma(m))\Gamma(\tau(m))\int dq_m e^{-q_m}L_{\sigma(m)-1}(q_m)L_{\tau(m)-1}(q_m)\\
&=&\sum_{\sigma\in S_N}\prod^N_{m=1}\Gamma(\sigma(m))^2\int d q_i\sqrt{q_i}e^{-q_i}L_{\sigma(i)-1}(q_i)L_{\sigma(i)-1}(q_i)\\
&=&(N-1)!\prod^N_{m=1}\Gamma(m)^2\Pa{\sum^N_{k=1}I^{(1/2)}_{kk}},
\end{eqnarray*}
where $S_N$ is the permutation group on $\set{1,...,N}$ and $I^{(\alpha)}_{ij}$ is defined as \cite{Datta2010,Scott2003}
\begin{eqnarray}
I^{(\alpha)}_{ij}=\int^\infty_0 e^{-x}x^\alpha L_i(x)L_j(x)dx.
\end{eqnarray}
Thus,
\begin{eqnarray*}
\int\sum^N_{i=1}\sqrt{q_i}Q(\mathbf{q})d\mathbf{q}
&=&N!\prod^N_{m=1}\Gamma(m)^2(\sum^N_{k=1}I^{(1/2)}_{kk})\\
&=&\overline{Q}\Pa{\sum^N_{k=1}I^{(1/2)}_{kk}}.
\end{eqnarray*}
It implies that
\begin{align}
\int\sum^N_{i=1}\sqrt{\lambda_i}P(\mathbf{\Lambda})d\mathbf{\Lambda}
=\frac{\Gamma(N^2)}{\Gamma(N^2+1/2)}\Pa{\sum^N_{k=1}I^{(1/2)}_{kk}},
\end{align}
where $I^{(1/2)}_{kk}$ is given by \cite{Datta2010}
\begin{eqnarray}
I^{(1/2)}_{kk}=\frac{(-1)^k}{k!}\sum^k_{n=0}
\Pa{\begin{array}{cc}
k\\
n
\end{array}}
\frac{[\Gamma(n+3/2)]^2}{n!\Gamma(n-k+3/2)}.
\end{eqnarray}
Thus, the average value of $\sD^2_B(\psi_{AB},\cM_E)$ over Haar-distributed pure states is given by
\begin{eqnarray}
\label{eq:bures-dis-2}
2\Br{1-\frac{1}{\sqrt{N}}\frac{\Gamma(N^2)}{\Gamma(N^2+1/2)}\left(\sum^N_{k=1}I^{(1/2)}_{kk}\right)}.
\end{eqnarray}
Consider the map $\cG:\ket{\psi}_{AB}\to \sD^2_B(\psi_{AB}, \cM_E):=\cG(\psi_{AB})$. Then, similar to the reasoning as in the proof of Proposition \ref{prop:N1}, we have
 \begin{align}
 &\left|\sD^2_B(\psi_{AB}, \cM_E)-\sD^2_B(\phi_{AB}, \cM_E)\right|\nonumber\\
 &=2|F(\rho_A, \mathbb{I}/N)-F(U\sigma_AU^\dag, \mathbb{I}/N)|\nonumber\\
\label{eq:neg2}&\leq 2\sqrt{1-F(\rho_A, U\sigma_AU^\dag)^2}\nonumber\\
&\leq \sqrt{2}||\psi_{AB}-\phi_{AB}||_2.
 \end{align}
% where the last inequality comes from the following fact \cite{Fannes2012,Zhang2015qip}
%  \begin{align*}
% |F(\rho, \tau)-F(\sigma, \tau)|
% \leq \sqrt{1-F(\rho, \sigma)^2},
% \end{align*}
% for any states $\rho$, $\sigma$ and $\tau$.
% In analogy to Proposition \ref{prop:N1},
% there exists a unitary operator $U$ such that
% $F(\rho_A, U\sigma_AU^\dag)\geq |\iinner{\psi_{AB}}{\phi_{AB}}|$.
% Therefore, using  Eq. \eqref{eq:neg2} and the fact that $||\psi_{AB}-\phi_{AB}||_2=\sqrt{2-2\iinner{\psi_{AB}}{\phi_{AB}}}$,
% we have
% \begin{align}
% \left|\sD^2_\mathrm{B}(\psi_{AB}, \cM_E)-\sD^2_\mathrm{B}(\phi_{AB}, \cM_E)\right|\leq \sqrt{2}||\psi_{AB}-\phi_{AB}||_2.~~~
% \end{align}
Thus, the Lipschitz constant for the function $\cG:\ket{\psi}_{AB}\to \sD^2_\mathrm{B}(\psi_{AB}, \cM_E)$ can be taken to be $\sqrt{2}$. Now, using L\'evy's lemma we establish that $\sD^2_\mathrm{B}(\psi_{AB},\cM_E)$ concentrates around its expected value as $N\to\infty$. Again the average distance $\mathbb{E}_{\psi_{AB}}\sD^2_{\mathrm{B}}(\psi_{AB},\cM_E)$ is finite and nonzero in contrast to our expectations (see Figure \ref{fig}).

Furthermore, we ask how close the reduced state $\rho_A$ of $\ket{\psi}_{AB}$ to the maximally mixed state. Let $\mathcal{H}$ be a complex Hilbert space, then for any two positive semidefinite operators $\rho$ and $\sigma$ in $\mathcal{H}$ \cite{Watrous2011}, we have
\begin{eqnarray}
\label{lem:1v2}
\norm{\rho-\sigma}_1\geq \norm{\sqrt{\rho}-\sqrt{\sigma}}^2_2.
\end{eqnarray}
Now, taking $\rho=\rho_A$ where $\rho_A$ is the reduced state of $\ket{\psi}_{AB}$ and $\sigma=\frac{\mathbb{I}}{N}$, we obtain
\begin{eqnarray*}
\norm{\rho_A-\frac{\mathbb{I}}{N}}_1 \geq \sD^2_\mathrm{B}(\psi_{AB},\cM_E).
\end{eqnarray*}
Thus,
\begin{eqnarray*}
\mathbb{E}_{\psi_{AB}}\norm{\rho_A-\frac{\mathbb{I}}{N}}_1\geq \mathbb{E}_{\psi_{AB}}\sD^2_\mathrm{B}(\psi_{AB},\cM_E).
\end{eqnarray*}
The average value of $\sD^2_\mathrm{B}(\psi_{AB},\cM_E)$ over Haar-distributed pure states approaches to a fixed value which is close to $2$. Thus, we see that the average value of $\norm{\rho_A-\frac{\mathbb{I}}{N}}_1$ will not tend to zero in $N\to\infty$ limit.

\begin{figure}
\includegraphics[width=70mm]{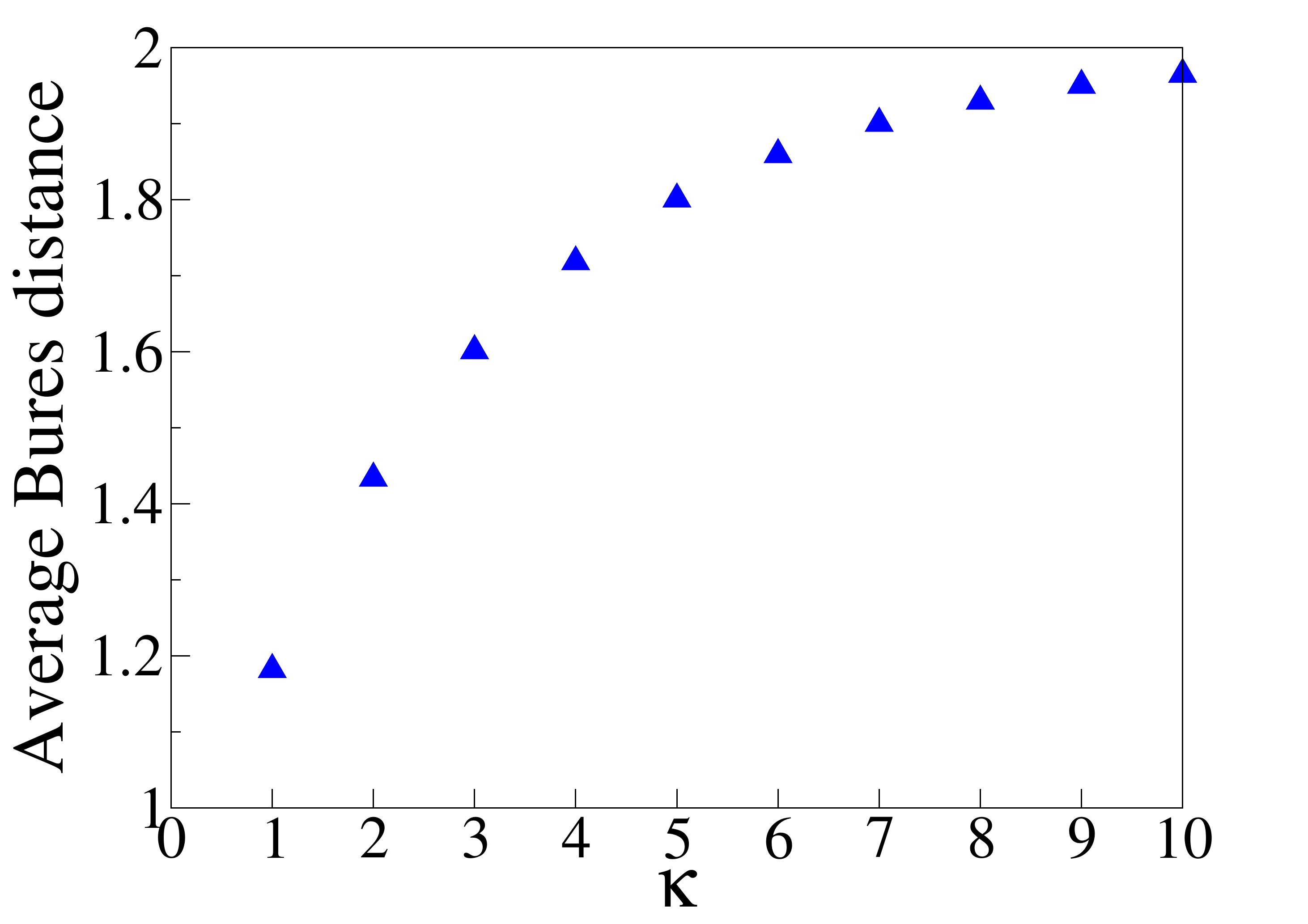}
\caption{The plot of the average Bures distance of random bipartite pure states from the set of maximally entangled states as a function of $\kappa$ such that $N=2^\kappa$. It shows that as we increase $N$, the Bures distance approaches to a fixed value which is close to $2$ (see Eq. \ref{eq:bures-dis-2}).}
\label{fig}
\end{figure}
%============================================================%
\section{Typicality of the average $l_1$ norm of coherence of random pure states}\label{sec:averl1}
%============================================================%

For a random pure state $\ket{\psi}=\left(\sqrt{\lambda_1}e^{i\theta_1},\ldots, \sqrt{\lambda_N}e^{i\theta_N}\right)^T$, the joint distribution of $\set{\lambda_i}$ will be the uniform distribution on a simplex $\Delta_{N-1}:=\left\{\left( \lambda_1,\ldots,\lambda_N |\sum_{i=1}^N\lambda_i=1, \lambda_i\geq 0~ \mathrm{for}~1\leq i\leq N  \right)\right\}$ and is given by \cite{UttamS2015}
\begin{eqnarray}
\label{eq:prob-dis}
p_c(\mathbf{\Lambda})d\mathbf{\Lambda}=\Gamma(N)\delta\left(1-\sum^N_{i=1}\lambda_i\right)\prod^N_{i=1}\mathrm{d}\lambda_i.
\end{eqnarray}
where $d\mathbf{\Lambda}=\prod^N_{k=1}d\lambda_k$. This distribution is a special Dirichlet distribution \cite{Frigyik2010}. Moreover, moments of such uniform Dirichlet-distributed random variables can be written as \cite{Frigyik2010}
\begin{eqnarray}
\mathbb{E}\Br{\prod^N_{i=1}{\lambda^{\alpha_i}_i}}
=\frac{\Gamma(N)}{\Gamma(N+\sum^N_{i=1}\alpha_i)}
\times \prod^N_{i=1}\Gamma(1+\alpha_i).
\end{eqnarray}
This will be crucial for our calculations. The $l_1$ norm of coherence for the pure state $\ket{\psi}=(\sqrt{\lambda_1}e^{i\theta_1},\ldots, \sqrt{\lambda_N}e^{i\theta_N})^T$ is given by
\begin{eqnarray}
C_{l_1}(\psi)=\sum_{i\neq j}\sqrt{\lambda_i \lambda_j}
=\left(\sum^N_{i=1}\sqrt{\lambda_i}\right)^2-1.
\end{eqnarray}
Here and in the rest parts of the paper we have $\psi=\ket{\psi}\bra{\psi}$.
The average value of the $l_1$ norm of coherence over all pure states, denoted by $\mathbb{E}_\psi C_{l_1}(\psi)$, is defined as $\mathbb{E}_\psi C_{l_1}(\psi):=\int d(\psi)C_{l_1}(\psi)$, which can be easily calculated as
\begin{eqnarray*}
\mathbb{E}_\psi C_{l_1}(\psi)&=&\int d(\psi)C_{l_1}(\psi)\\
&=&\Gamma(N)\sum_{i\neq j} \int \sqrt{\lambda_i}\sqrt{\lambda_j}\delta(1-\sum^N_{i=1}\lambda_i)\prod^N_{i=1}\diff\lambda_i\\
&=&\frac{(N-1)\pi}{4},
\end{eqnarray*}
where we have made use of the Dirichlet's integral (cf. \cite{Puchala2015}).
In the $N$ dimensional case, the maximum value of the $l_1$ norm of coherence is given by $N-1$. Therefore,
\begin{eqnarray}
\frac{\mathbb{E}_\psi C_{l_1}(\psi)}{C^{max}_{l_1}}=\frac{\pi}{4}\approx 0.785398,
\end{eqnarray}
which holds for any dimension $N$. However, the expected value of the relative entropy of coherence over all pure states has been shown to be \cite{UttamS2015}
\begin{eqnarray}
\mathbb{E}_\psi C_r(\psi)=\sum^N_{k=2}1/k.
\end{eqnarray}
When $N$ is very large, $\sum^N_{k=1}1/k\approx \ln(N)+\gamma$, where $\gamma$ is  Euler's constant \cite{Andrews1999}. Besides the maximal relative entropy of coherence of an $N$ dimensional quantum system is $C^{max}_{r}=\ln(N)$. Thus,
\begin{eqnarray}
\frac{\mathbb{E}_\psi C_r(\psi)}{C^{max}_r}\to 1 ~\mathrm{as}~N\to\infty,
\end{eqnarray}
which is very different from the case of $l_1$ norm of coherence. In Ref.\cite{UttamS2015}, the authors have shown the  concentration of measure phenomenon for relative entropy coherence based on the Levy's Lemma. In the following, we show that $l_1$ norm of coherence does not show the concentration phenomenon.

\begin{thm}[No concentration for the $l_1$ norm of coherence]
\label{thm:lip-con}
Let $\ket{\psi}$ be a random pure state in $N$ dimensional Hilbert space, then the probability
that $C_{l_1}(\psi)$ is not close to $\frac{\pi}{4}(N-1)$ is nonzero even for very large $N$, i.e., for any $\varepsilon>0$
\begin{align}
\mathrm{Pr} \Set{\left|C_{l_1}(\psi) -\frac{\pi}{4} (N-1)\right|>\varepsilon}
\leq  2 \exp\Pa{-\frac{4\varepsilon^2}{9\pi^3 N \ln2}}.
\end{align}
\end{thm}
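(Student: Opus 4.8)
The plan is to apply L\'evy's lemma (Eq.~\eqref{eq:levy-lemma}) to the function $C_{l_1}$ viewed as a map on the set of pure states, exactly in the spirit of the concentration argument for the (scaled) negativity. Two ingredients are required: the expected value $\mathbb{E}_\psi C_{l_1}(\psi)$, which has already been evaluated above to be $\frac{\pi}{4}(N-1)$ via the Dirichlet integral, and a Lipschitz constant for $C_{l_1}$ with respect to the Euclidean (Hilbert--Schmidt) metric. A pure state $\ket{\psi}=\sum_i\psi_i\ket{i}$ in $\complex^N$ is a unit vector in $\real^{2N}$ and hence lives on $\mathbb{S}^{2N-1}$, so in L\'evy's lemma I would take $k+1=2N$. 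Everything then reduces to pinning down the Lipschitz constant $\eta$ and reading off the exponent.

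To find $\eta$ I would mimic the proof of Proposition~\ref{prop:N1}. Writing $\lambda_i=|\iinner{i}{\psi}|^2$ and letting $\rho_{\mathrm{d}}=\diag(\lambda_1,\dots,\lambda_N)$ be the diagonal (dephased) part of $\psi$ in the reference basis, the identity $F(\rho,\mathbb{I}/N)=\frac{1}{\sqrt N}\mathrm{Tr}\sqrt{\rho}$ gives $C_{l_1}(\psi)=\big(\sum_i\sqrt{\lambda_i}\big)^2-1=N\,F(\rho_{\mathrm{d}},\mathbb{I}/N)^2-1$. For a second state $\phi$ with diagonal part $\sigma_{\mathrm{d}}=\diag(\mu_1,\dots,\mu_N)$ the same fidelity inequality $|F(\rho,\tau)^2-F(\sigma,\tau)^2|\leq\sqrt{1-F(\rho,\sigma)^2}$ used in Proposition~\ref{prop:N1}, now with $\tau=\mathbb{I}/N$, yields
\begin{align*}
|C_{l_1}(\psi)-C_{l_1}(\phi)|
=N\left|F(\rho_{\mathrm{d}},\mathbb{I}/N)^2-F(\sigma_{\mathrm{d}},\mathbb{I}/N)^2\right|
\leq N\sqrt{1-F(\rho_{\mathrm{d}},\sigma_{\mathrm{d}})^2}.
\end{align*}

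The one genuinely new step, and the place where I expect the only real friction, is lower-bounding $F(\rho_{\mathrm{d}},\sigma_{\mathrm{d}})$ by the pure-state overlap $|\iinner{\psi}{\phi}|$. Here the basis is fixed, so there is no unitary optimization as in Proposition~\ref{prop:N1}; instead I would exploit that $\rho_{\mathrm{d}}$ and $\sigma_{\mathrm{d}}$ commute, whence $F(\rho_{\mathrm{d}},\sigma_{\mathrm{d}})=\sum_i\sqrt{\lambda_i\mu_i}$ is the classical Bhattacharyya coefficient. The triangle inequality then gives $|\iinner{\psi}{\phi}|=\big|\sum_i\bar\psi_i\phi_i\big|\leq\sum_i|\psi_i||\phi_i|=\sum_i\sqrt{\lambda_i\mu_i}=F(\rho_{\mathrm{d}},\sigma_{\mathrm{d}})$ (equivalently, this is monotonicity of fidelity under the dephasing channel). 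Combining this with $||\psi-\phi||_2=\sqrt{2-2|\iinner{\psi}{\phi}|^2}$ produces
\begin{align*}
|C_{l_1}(\psi)-C_{l_1}(\phi)|\leq N\sqrt{1-|\iinner{\psi}{\phi}|^2}=\frac{N}{\sqrt2}\,||\psi-\phi||_2,
\end{align*}
so the Lipschitz constant is $\eta=N/\sqrt2$.

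Finally I would substitute $k+1=2N$ and $\eta^2=N^2/2$ into L\'evy's lemma: the exponent becomes $-\dfrac{2N\,\varepsilon^2}{9\pi^3(N^2/2)\ln2}=-\dfrac{4\varepsilon^2}{9\pi^3 N\ln2}$, which together with $\mathbb{E}_\psi C_{l_1}=\frac{\pi}{4}(N-1)$ is precisely the claimed inequality. The point worth stressing afterwards is interpretive rather than technical: because the \emph{squared} Lipschitz constant grows like $N^2$ while the sphere dimension grows only like $N$, the exponent carries a factor $1/N$, so the right-hand side tends to the trivial value $2$ as $N\to\infty$. Thus this application of L\'evy's lemma \emph{fails} to certify concentration---matching the theorem's title---and it is the separate variance computation, rather than this bound, that is really needed to diagnose the (non-)typicality of the unscaled $l_1$ norm of coherence.
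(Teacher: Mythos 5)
Your proposal is correct and takes essentially the same approach as the paper: apply L\'evy's lemma on $\mathbb{S}^{2N-1}$ (so $k+1=2N$) to the function $C_{l_1}$, using the Dirichlet-integral mean $\frac{\pi}{4}(N-1)$ computed earlier and the Lipschitz bound $|C_{l_1}(\psi)-C_{l_1}(\phi)|\leq \frac{N}{\sqrt{2}}\,||\psi-\phi||_2$. The paper justifies this Lipschitz bound only by saying it is ``similar to Proposition~\ref{prop:N1}''; your dephasing/Bhattacharyya-coefficient argument supplies precisely the details that make that adaptation work (replacing the unitary optimization of Proposition~\ref{prop:N1} by monotonicity of fidelity under dephasing), and the one item you omit --- the paper's example $\ket{\psi}=\ket{1}$, $\ket{\phi}=\frac{1}{\sqrt{N}}\sum_{i}\ket{i}$ showing that $\eta\sim N/\sqrt{2}$ is asymptotically tight --- is not needed for the stated inequality.
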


\begin{proof}
We use L\'evy's lemma (Eq. (\ref{eq:levy-lemma})) to prove the theorem. For this, we need to find the Lipschitz constant for the $l_1$ norm of coherence, i.e., $C_{l_1}:\ket{\psi}\rightarrow C_{l_1}(\psi)$,  where $\ket{\psi}\in\mathcal{H}^N$. Let $\ket{\psi}=\sum_{i=1}^{N}\psi_i\ket{i}$ and $\ket{\phi}=\sum_{i=1}^{N}\phi_i\ket{i}$. Now, similar to
Proposition \ref{prop:N1}, it can be shown that
\begin{align}
\label{eq:lip-l1}
\left| C_{l_1}(\psi) - C_{l_1}(\phi) \right| \leq \frac{N}{\sqrt{2}} ||\psi-\phi||_2.
\end{align}

Thus, the Lipschitz constant $\eta$ for the $l_1$ norm of coherence, i.e., $C_{l_1}:\ket{\psi}\rightarrow C_{l_1}(\psi)$,  where $\ket{\psi}\in\mathcal{H}^N$,  is less than $\frac{N}{\sqrt{2}}$. Take $\ket{\psi}=\ket{1}$ and $\ket{\phi}=\frac{1}{\sqrt{N}}\sum^N_{i=1}\ket{i}$, then
$|C_{l_1}(\psi)-C_{l_1}(\phi)|=N-1$ and $||\psi-\phi||_2=\sqrt{2-\frac{2}{N}}$.
Then
$\eta \geq \frac{N-1}{\sqrt{2-\frac{2}{N}}}\sim \frac{1}{\sqrt{2}}N$, when $N$ is large. Thus,
$\eta$ can be taken to be $\frac{N}{\sqrt{2}}$, as $N$ is very large.
Having obtained the Lipschitz constant and the average value of the $l_1$ norm of coherence, we are ready to use L\'evy's lemma (Eq. (\ref{eq:levy-lemma})), which for any $\varepsilon >0$ reads as
\begin{align*}
\mathrm{Pr} \Set{\left|C_{l_1}(\psi)-\frac{\pi}{4}(N-1) \right|>\varepsilon}
\leq  2 \exp\Pa{-\frac{4\varepsilon^2}{9\pi^3 N \ln2}}.
\end{align*}
This completes the proof of the theorem.
\end{proof}
Theorem \ref{thm:lip-con} tells that the probability that the $l_1$ norm of coherence does not concentrate around $(N-1)\pi/4$ is non-vanishing in the limit $N\rightarrow\infty$. Thus, based on Levy's Lemma, we proved that the $l_1$ norm of coherence does not show the concentration phenomenon. However, in the following we show that the scaled $l_1$ norm of coherence $\frac{C_{l_1}(\psi)}{C^{max}_{l_1}}$ concentrates around $\pi/4$ for very large values of $N$. This is consequence of the following corollary to the above theorem.

\begin{cor}\label{cor:con_of_l1}
Let $\ket{\psi}$ be a random pure state in an $N$ dimensional Hilbert space, then the probability
that $\frac{C_{l_1}(\psi)}{C^{max}_{l_1}}$ is not close to $\frac{\pi}{4}$ is bounded from above by an exponentially small number when $N$ is large, i.e., for any $\varepsilon>0$
\begin{align}
\mathrm{Pr} \Set{\left|\frac{C_{l_1}(\psi)}{C^{max}_{l_1}} -\frac{\pi}{4} \right|>\varepsilon}
\leq  2 \exp\Pa{-\frac{4(N-1)^2\varepsilon^2}{9N\pi^3\ln2}}.
\end{align}
\end{cor}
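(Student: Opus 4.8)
The plan is to obtain this corollary as an immediate rescaling of Theorem~\ref{thm:lip-con}, with no new concentration estimate required. The only external input is that the maximal $l_1$ norm of coherence in dimension $N$ is $C^{max}_{l_1}=N-1$, which has already been recorded above.

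First I would rewrite the deviation event in terms of the unnormalized quantity. Since $N-1>0$, we have the exact equivalence
\[
\left|\frac{C_{l_1}(\psi)}{C^{max}_{l_1}}-\frac{\pi}{4}\right|>\varepsilon
\quad\Longleftrightarrow\quad
\left|C_{l_1}(\psi)-\frac{\pi}{4}(N-1)\right|>(N-1)\varepsilon,
\]
obtained simply by multiplying the inner expression through by $C^{max}_{l_1}=N-1$. Hence the two events coincide and their probabilities are equal.

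It then remains to apply Theorem~\ref{thm:lip-con} with its deviation parameter set to $(N-1)\varepsilon$ in place of $\varepsilon$. Substituting into the bound of that theorem replaces the numerator $4\varepsilon^2$ by $4(N-1)^2\varepsilon^2$, producing exactly
\[
\mathrm{Pr}\Set{\left|\frac{C_{l_1}(\psi)}{C^{max}_{l_1}}-\frac{\pi}{4}\right|>\varepsilon}
\leq 2\exp\Pa{-\frac{4(N-1)^2\varepsilon^2}{9N\pi^3\ln2}},
\]
which is the assertion.

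Since every ingredient (the Lipschitz constant $N/\sqrt{2}$ and the expectation $\frac{\pi}{4}(N-1)$) is inherited from the theorem, there is no genuine obstacle here; the only point requiring care is correctly tracking the rescaling factor $N-1$. What is conceptually worth stressing is its effect on the exponent: the factor $1/N$ appearing in the theorem's bound (which renders it useless as $N\to\infty$) is replaced by $(N-1)^2/N\sim N$, so that the \emph{scaled} coherence concentrates sharply around $\pi/4$ whereas the unscaled coherence does not. I would flag this contrast explicitly, since it is the entire reason for isolating the corollary.
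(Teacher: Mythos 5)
Your proposal is correct and is essentially the paper's argument: both derive the corollary from Theorem~\ref{thm:lip-con} by rescaling with $C^{max}_{l_1}=N-1$, arriving at the identical bound. The only cosmetic difference is that the paper rescales the Lipschitz constant to $\frac{N}{\sqrt{2}(N-1)}$ and reapplies L\'evy's lemma to $\frac{C_{l_1}(\psi)}{C^{max}_{l_1}}$, whereas you rescale the deviation parameter to $(N-1)\varepsilon$ inside the theorem's conclusion --- these are mathematically equivalent.
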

\begin{proof}
From Theorem \ref{thm:lip-con}, the Lipschitz constant for the function $\frac{C_{l_1}(\psi)}{C^{max}_{l_1}}$ is given by $\frac{N}{\sqrt{2}(N-1)}$ as for any two pure states $\ket{\psi}$ and $\ket{\phi}$, we have
\begin{align}
\left|\frac{C_{l_1}(\psi)}{C^{max}_{l_1}} - \frac{C_{l_1}(\phi)}{C^{max}_{l_1}} \right| &\leq \frac{N}{\sqrt{2}(N-1)}||\psi-\phi||_2.
\end{align}
Now, applying L\'evy's lemma to the function $\frac{C_{l_1}(\psi)}{C^{max}_{l_1}}$, the proof follows.
\end{proof}
Corollary \ref{cor:con_of_l1} shows that the probability that $\frac{C_{l_1}(\psi)}{C^{max}_{l_1}}$ is not close to $\frac{\pi}{4}$ is close to zero for sufficiently large $N$. Therefore,  $\frac{C_{l_1}(\psi)}{C^{max}_{l_1}}$ shows concentration phenomenon, although $C_{l_1}(\psi)$ does not show the same. This point can also be seen from the calculation of the variance $\mathbb{D}_\psi C_{l_1}(\psi)$ of the $l_1$ norm of coherence of Haar distributed pure states, where
\begin{eqnarray}
\mathbb{D}_\psi C_{l_1}(\psi) = \mathbb{E}_\psi C^2_{l_1}(\psi)
- (\mathbb{E}_\psi C_{l_1}(\psi))^2.
\end{eqnarray}
Based on the following formula \cite{Datta2010}, $
%\begin{eqnarray*}
\Pa{\sum^N_{i=1}\sqrt{\lambda_i}}^4=1 + 2\sum^N_{\substack{i,j=1\\i\neq j}}\left(\sqrt{\lambda_i\lambda_j} + \lambda_i\lambda_j\right)
+4\sum^N_{\substack{i,j,k=1\\i\neq j\neq k}}\lambda_i\sqrt{\lambda_j\lambda_k}+
\sum^N_{\substack{i,j,k,l=1\\i\neq j\neq k\neq l}}\sqrt{\lambda_i\lambda_j\lambda_k\lambda_l},
%\end{eqnarray*}
$
we have
\begin{align*}
&C^2_{l_1}(\psi)=\Br{(\sum^N_{i=1}{\sqrt{\lambda_i}})^2-1}^2\\
&=2\sum^N_{\substack{i,j=1\\i\neq j}}\lambda_i\lambda_j
+4\sum^N_{\substack{i,j,k=1\\i\neq j\neq k}}\lambda_i\sqrt{\lambda_j\lambda_k}+\sum^N_{\substack{i,j,k,l=1\\i\neq j\neq k\neq l}}\sqrt{\lambda_i\lambda_j\lambda_k\lambda_l}.\end{align*}
Thus,
\begin{align*}
&\mathbb{E}_\psi C^2_{l_1}(\psi)=\int d(\psi) C^2_{l_1}(\psi)\\
&=\Gamma(N)\int \Big[2\sum^N_{\substack{i,j=1\\i\neq j}}\lambda_i\lambda_j
+4\sum^N_{\substack{i,j,k=1\\i\neq j\neq k}}\lambda_i\sqrt{\lambda_j\lambda_k}\\
&~~~~~~~+\sum^N_{\substack{i,j,k,l=1\\i\neq j\neq k\neq l}}\sqrt{\lambda_i\lambda_j\lambda_k\lambda_l}\Big] \times\delta(1-\sum^N_{i=1}\lambda_i)\prod^N_{i=1}d\lambda_i\\
&=2\sum^N_{i\neq j} \frac{\Gamma(N)}{\Gamma(N+2)}
+4\sum^N_{i\neq j\neq k}\frac{\Gamma(N)}{\Gamma(N+2)}\Gamma(\frac{3}{2})^2\\
 &~~~~~~~+\sum^N_{i\neq j\neq k\neq l} \frac{\Gamma(N)}{\Gamma(N+2)}(\Gamma(\frac{3}{2}))^4\\
&=\frac{(N-1)(N-2)(N-3)}{16(N+1)}\pi^2+\frac{(N-1)(N-2)}{N+1}\pi\\
&~~~~~~~+\frac{2(N-1)}{N+1}.
\end{align*}
Also, we have
\begin{eqnarray*}
(\mathbb{E}_\psi C_{l_1}(\psi))^2=\frac{(N-1)^2}{16}\pi^2.
\end{eqnarray*}
Using above equations, we have
\begin{eqnarray*}
\mathbb{D}_\psi C_{l_1}(\psi)&= &\mathbb{E}_\psi C^2_{l_1}(\psi)
- (\mathbb{E}_\psi C_{l_1}(\psi))^2\\&=&
\frac{N-1}{16(N+1)}\Br{(16-5\pi)N +(7\pi^2-32\pi +32)}.
\end{eqnarray*}
The variance of the $l_1$ norm of coherence $\mathbb{D}_\psi C_{l_1}(\psi)\sim O(N)$ and therefore will tend to infinity as $N$ increases to infinity. The variance of $\frac{C_{l_1}(\psi)}{C^{max}_{l_1}}$ is given by
\begin{eqnarray}
\mathbb{D}_\psi \frac{C_{l_1}(\psi)}{C^{max}_{l_1}}
&=&\frac{1}{(N-1)^2}\mathbb{D}_\psi C_{l_1}(\psi)\nonumber\\
&=&\frac{1}{16(N^2-1)}\Br{(16-5\pi)N +(7\pi^2-32\pi +32)}.\nonumber
\end{eqnarray}
For $N\to\infty$, $\mathbb{D}_\psi \frac{C_{l_1}(\psi)}{C^{max}_{l_1}}$ goes to zero. As the variance  measures the spread of a set of data around the average (mean) value, the $l_1$ norm of coherence will not concentrate around its average value while the scaled $l_1$ norm of coherence $\frac{C_{l_1}(\psi)}{C^{max}_{l_1}}$ will concentrate around its average value as $N$ increases to infinity. In fact, let $Y$ be a random variable with expected value $\mathbb{E}Y$ and variance $\mathbb{D}Y$, then the Chebyshev's inequality \cite{Kallenberg2006},
\begin{eqnarray*}
\mathrm{Pr}(|Y-\mathbb{E}Y|>\varepsilon)\leq\frac{\mathbb{D}Y}{\varepsilon^2}
\end{eqnarray*}
implies
\begin{eqnarray}
\mathrm{Pr}\left(\abs{\frac{C_{l_1}(\psi)}{C^{max}_{l_1}}-\mathbb{E}_\psi\frac{C_{l_1}(\psi)}{C^{max}_{l_1}}}>\varepsilon\right)\leq\frac{\mathbb{D}_\psi \frac{C_{l_1}(\psi)}{C^{max}_{l_1}}}{\varepsilon^2}.
\end{eqnarray}
The right hand side will tend to zero as $N$ increases to infinity. This is in agreement with
Corollary \ref{cor:con_of_l1}.

\section{Distance of a random pure state from the set of maximally coherent states and its typicality}
\label{sec:aver_dis}
Given a state $\ket{\psi}=\sum^N_{i=1}\sqrt{\lambda_i}e^{i\theta_i}\ket{i}$, we ask how far this state is from the set $\cM$ of maximally coherent states. The maximally coherent states are given by $\ket{\phi}=\frac{1}{\sqrt{N}}\sum^N_{j=1}e^{i\theta_j}\ket{j}$ in the resource theory of coherence \cite{Baumgratz2014, Peng2015}. We first consider the trace distance and the Hilbert-Schmidt distance and then the Bures distance as the distance measures.
\subsection{The trace distance and the Hilbert-Schmidt distance}
From Eqs. (\ref{tr-dis}) and (\ref{hs-dis}), we have
\begin{align}
&\Delta_{\mathrm{Tr}}(\psi, \cM):=\inf_{\phi\in\cM}||\psi-\phi||_1\nonumber\\
&~~~~~~~~~~~~~~~~~~~~=2\sqrt{1-\frac{1}{N}\left(\sum^N_{i=1}\sqrt{\lambda_i}\right)^2};\nonumber\\
&\Delta_{\mathrm{HS}}(\psi, \cM):=\inf_{\phi\in\cM}||\psi-\phi||_2\nonumber\\
&~~~~~~~~~~~~~~~~~~~~~~=\sqrt{2}\sqrt{1-\frac{1}{N}\left(\sum^N_{i=1}\sqrt{\lambda_i}\right)^2}.\nonumber
\end{align}
Since $\Delta_{\mathrm{HS}}(\psi, \cM)$ is equivalent to $\Delta_{\mathrm{Tr}}(\psi, \cM)$ up to a factor $\frac{1}{\sqrt{2}}$, we only
need to  consider $\Delta_{\mathrm{Tr}}(\psi, \cM)$ here. From $C_{l_1}(\psi)=\left(\sum^N_{i=1}\sqrt{\lambda_i}\right)^2-1$, we have
\begin{eqnarray}
\Delta_{\mathrm{Tr}}(\psi, \cM)&=2\sqrt{1-\frac{1}{N}(C_{l_1}(\psi)+1)}\nonumber\\
&=\frac{2}{\sqrt{N}}\sqrt{N-1-C_{l_1}(\psi)}.
\end{eqnarray}
Thus, we have
\begin{eqnarray}\label{dual:1}
\frac{N\Delta^2_{\mathrm{Tr}}(\psi,\cM)}{4}+C_{l_1}(\psi)=C^{max}_{l_1}.
\end{eqnarray}
Similarly for Hilbert-Schmidt norm we have
\begin{eqnarray}\label{dual:2}
\frac{N\Delta^2_{\mathrm{HS}}(\psi,\cM)}{2}+C_{l_1}(\psi)=C^{max}_{l_1}.
\end{eqnarray}
%The physical meaning of the complementary relationship  Eqs. \eqref{dual:1} and \eqref{dual:2} is easy: if the coherence of state $\ket{\psi}$ is larger which means the $\ket{\psi}$ is more coherent, then it is closer to maximally coherent state and vice versa.
Now, integrating both sides of Eq. \eqref{dual:1}, we get
\begin{eqnarray*}
\frac{N}{4}\int d(\psi)\Delta^2_{\mathrm{Tr}}(\psi,\cM)+\int d(\psi) C_{l_1}(\psi)=N-1.
\end{eqnarray*}
Defining $\Delta^{\mathrm{RMS}}_{\mathrm{Tr}}=\sqrt{\int d(\psi)\Delta^2_{\mathrm{Tr}}(\psi,\cM)}$, we have $\Delta^{\mathrm{RMS}}_{\mathrm{Tr}}=\sqrt{(4-\pi)\frac{N-1}{N}}$, which for $N\rightarrow\infty$ approaches to $\sqrt{4-\pi}\approx0.9625$. In the following, we show the concentration of the distance $\Delta^2_{\mathrm{Tr}}(\psi, \cM)$ around its average value.
\begin{prop}
\label{prop:tr-hs-con}
Let $\ket{\psi}$ be a random pure state in $\mathcal{H}$ with $dim\cH=N$. Then, the probability that $\Delta^2_{\mathrm{Tr}}(\psi, \cM)$ is not close to $\frac{N-1}{N}(4-\pi)$ is bounded from above by an exponentially small number when $N$ is large, i.e., for any $\varepsilon>0$
\begin{eqnarray}
\mathbb{P}_{\mathrm{Tr}} \leq 2\exp\Pa{-\frac{N\varepsilon^2}{36\pi^3\ln2}},
\end{eqnarray}
where $\mathbb{P}_{\mathrm{Tr}}=\mathrm{Pr} \Set{\left|\Delta^2_{\mathrm{Tr}}(\psi, \cM)- \frac{N-1}{N}(4-\pi) \right|>\varepsilon}$.
\end{prop}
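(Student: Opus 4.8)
The plan is to apply L\'evy's lemma (Eq.~\eqref{eq:levy-lemma}) directly to the function $\cG:\ket{\psi}\mapsto \Delta^2_{\mathrm{Tr}}(\psi,\cM)$, exactly as was done for the entanglement distances in Section~\ref{sec:aver_ent}. Two ingredients are needed: the expected value of $\Delta^2_{\mathrm{Tr}}(\psi,\cM)$, which has already been computed, and a Lipschitz constant for $\cG$. For the first, the complementarity relation~\eqref{dual:1} gives $\Delta^2_{\mathrm{Tr}}(\psi,\cM)=\frac{4}{N}\bigl(C^{max}_{l_1}-C_{l_1}(\psi)\bigr)$, so that $\mathbb{E}_\psi \Delta^2_{\mathrm{Tr}}(\psi,\cM)=\frac{4}{N}\bigl((N-1)-\tfrac{\pi}{4}(N-1)\bigr)=\frac{N-1}{N}(4-\pi)$, which is precisely the value around which we claim concentration. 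So the centering constant is already in hand and matches the statement.

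The main work is the Lipschitz constant, and here I would lean on the complementarity relation again rather than estimating $\cG$ from scratch. From Eq.~\eqref{dual:1} we have
\begin{eqnarray*}
\abs{\Delta^2_{\mathrm{Tr}}(\psi,\cM)-\Delta^2_{\mathrm{Tr}}(\phi,\cM)}
=\frac{4}{N}\abs{C_{l_1}(\psi)-C_{l_1}(\phi)},
\end{eqnarray*}
so the Lipschitz constant of $\cG$ is just $\frac{4}{N}$ times that of $C_{l_1}$. By the bound~\eqref{eq:lip-l1} established in the proof of Theorem~\ref{thm:lip-con}, $\abs{C_{l_1}(\psi)-C_{l_1}(\phi)}\leq \frac{N}{\sqrt2}\norm{\psi-\phi}_2$, and combining the two gives
\begin{eqnarray*}
\abs{\Delta^2_{\mathrm{Tr}}(\psi,\cM)-\Delta^2_{\mathrm{Tr}}(\phi,\cM)}
\leq \frac{4}{N}\cdot\frac{N}{\sqrt2}\norm{\psi-\phi}_2
=2\sqrt2\,\norm{\psi-\phi}_2,
\end{eqnarray*}
so we may take $\eta=2\sqrt2$, a constant independent of $N$.

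With centering value $\frac{N-1}{N}(4-\pi)$ and Lipschitz constant $\eta=2\sqrt2$, I would feed these into Eq.~\eqref{eq:levy-lemma}. The pure-state manifold in $N$ complex dimensions is the sphere $\mathbb{S}^{2N-1}$, so $k+1=2N$ in L\'evy's lemma; substituting $\eta^2=8$ yields an exponent $-\frac{2N\varepsilon^2}{9\pi^3\cdot 8\ln2}=-\frac{N\varepsilon^2}{36\pi^3\ln2}$, matching the stated bound exactly. The one subtlety I expect to be the main obstacle is bookkeeping the sphere dimension: the Haar-random pure state lives on $\mathbb{S}^{2N-1}$ (real dimension $2N-1$), and one must use $k+1=2N$ rather than $k+1=N$ for the constant to come out as claimed; getting this factor right is what makes the $36$ appear in the denominator. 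Everything else is a direct substitution, so once the dimension count and the two reused estimates are lined up, the proof closes immediately.
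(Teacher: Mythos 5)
Your proposal is correct and follows essentially the same route as the paper's own proof: the Lipschitz constant $\eta=2\sqrt{2}$ is obtained from the complementarity relation \eqref{dual:1} together with the bound \eqref{eq:lip-l1}, and L\'evy's lemma is then applied with the already-computed mean $\frac{N-1}{N}(4-\pi)$. The only difference is that you make explicit the sphere-dimension bookkeeping ($k+1=2N$ for the unit sphere in $\mathbb{C}^N$), which the paper leaves implicit in the phrase ``from L\'evy's lemma, the proof of the proposition follows''; this is a welcome clarification, not a departure.
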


\begin{proof}
Consider the functional $\cG:\ket{\psi}\to \cG(\psi)=\Delta^2_{\mathrm{Tr}}(\psi, \cM)$. Then,
\begin{eqnarray*}
|\Delta^2_{\mathrm{Tr}}(\psi, \cM)-\Delta^2_{\mathrm{Tr}}(\phi, \cM)|&=&\frac{4}{N}|C_{l_1}(\psi)-C_{l_1}(\phi)|\\
&\leq& 2\sqrt{2}||\psi-\phi||_2,
\end{eqnarray*}
where in the last line, we have used Eq. (\ref{eq:lip-l1}). Then, Lipschitz constant for $\Delta^2_{\mathrm{Tr}}(\psi, \cM)$ is $\eta\leq 2\sqrt{2}$. Now, from L\'evy's lemma, the proof of the proposition follows.
\end{proof}

\subsection{Bures distance}
The Bures distance between a random pure state $\ket{\psi}=\sum^N_{i=1}\sqrt{\lambda_i}e^{i\theta_i}\ket{i}$ and the set $\cM$ of maximally coherent states is given by
\begin{eqnarray}
\Delta_\mathrm{B}(\psi, \cM):&=&\inf_{\phi\in\cM}D_\mathrm{B}(\psi,\phi)\nonumber\\
&=&\sqrt{2}\sqrt{1-\frac{1}{\sqrt{N}}\left(\sum^N_{i=1}\sqrt{\lambda_i}\right)}.
\end{eqnarray}
Let us calculate $\mathbb{E}_\psi \Delta^2_\mathrm{B}(\psi, \cM)=\int d(\psi)\Delta^2_\mathrm{B}(\psi, \cM)$.
\begin{align*}
&\int d(\psi)\Delta^2_\mathrm{B}(\psi, \cM)\\
&=2-\frac{2}{\sqrt{N}}\int \Gamma(N) (\sum^N_{i=1}\sqrt{\lambda_i})\delta(1-\sum^N_{i=1}\lambda_i)\prod^N_{i=1}d\lambda_i\\
&=2-\frac{2}{\sqrt{N}}\Gamma(N)\sum^N_{i=1}\int \sqrt{\lambda_i}\delta(1-\sum^N_{i=1}\lambda_i)\prod^N_{i=1}d\lambda_i\\
&=2-\frac{2}{\sqrt{N}}\Gamma(N)\times N\times \frac{\Gamma(\frac{3}{2})}{\Gamma(N+\frac{1}{2})}\\
&=2-\sqrt{N}\cdot B\left(\frac{1}{2}, N\right),
\end{align*}
where the beta function $B(x,y)$ is defined as $B(x,y):=\int^1_0 t^{x-1}(1-t)^{y-1}dt$ for $\mathrm{Re}~ x>0$ and $\mathrm{Re}~ y>0$ \cite{Andrews1999}. Moreover, $B(x,y)$ can be written as $B(x,y)=\frac{\Gamma(x)\Gamma(y)}{\Gamma(x+y)}$ \cite{Andrews1999}.
In the limit $N\to \infty$, $\Gamma(N)\sim \sqrt{2\pi}N^{N-1/2}e^{-N}$ \cite{Andrews1999}. Then
$\sqrt{N}\cdot B(\frac{1}{2}, N)\to \sqrt{\pi}$, which implies $\mathbb{E}_\psi \Delta^2_\mathrm{B}(\psi, \cM)\to 2-\sqrt{\pi}\approx  0.2275$ as $N\to\infty$. Moreover, similar to Proposition \ref{prop:tr-hs-con}, we have the following proposition.
\begin{prop}
\label{prop:bures-con}
Let $\ket{\psi}$ be random pure state in $\mathcal{H}$ with $dim\cH=N$. Then, the probability that $\Delta^2_\mathrm{B}(\psi, \cM)$ is not close to $2-\sqrt{N} B(\frac{1}{2}, N)$ is bounded from above by an exponentially small number when $N$ is large, i.e., for any $\varepsilon>0$
\begin{eqnarray}
\mathbb{P}_\mathrm{B }\leq 2\exp\Pa{-\frac{N\varepsilon^2}{9\pi^3\ln2}},
\end{eqnarray}
where $\mathbb{P}_\mathrm{B}=\mathrm{Pr} \Set{\left|\Delta^2_\mathrm{B}(\psi, \cM)- \left(2-\sqrt{N} B(\frac{1}{2}, N) \right)\right|>\varepsilon}$.
\end{prop}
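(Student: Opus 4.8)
The plan is to reduce Proposition~\ref{prop:bures-con} to a single Lipschitz estimate and then invoke L\'evy's lemma, following the same route used for the entanglement Bures distance in Eq.~\eqref{eq:neg2}. The key preliminary observation is that $\Delta^2_\mathrm{B}(\psi,\cM)$ is an affine function of $\sum_{i=1}^N\sqrt{\lambda_i}$, and that this quantity is itself a fidelity. Writing $\cD(\rho)=\sum_{i=1}^N\proj{i}\rho\proj{i}$ for the completely dephasing channel in the reference basis, one has $\cD(\psi)=\diag(\lambda_1,\dots,\lambda_N)$ and
\begin{align*}
F\Pa{\cD(\psi),\mathbb{I}/N}=\frac{1}{\sqrt N}\trace\sqrt{\cD(\psi)}=\frac{1}{\sqrt N}\sum_{i=1}^N\sqrt{\lambda_i},
\end{align*}
so that $\Delta^2_\mathrm{B}(\psi,\cM)=2\Pa{1-F(\cD(\psi),\mathbb{I}/N)}$. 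Since the average value $\mathbb{E}_\psi\Delta^2_\mathrm{B}(\psi,\cM)=2-\sqrt N\,B(\tfrac12,N)$ has already been obtained above from the Dirichlet integral, the whole content of the proposition is the concentration, i.e.\ a Lipschitz bound for the map $\cG:\ket\psi\mapsto\Delta^2_\mathrm{B}(\psi,\cM)$.

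To obtain this bound I would mirror the chain of inequalities in Eq.~\eqref{eq:neg2}. For two pure states $\ket\psi,\ket\phi$, using $\cD(\mathbb{I}/N)=\mathbb{I}/N$ and the fidelity inequality $\abs{F(\rho,\tau)-F(\sigma,\tau)}\le\sqrt{1-F(\rho,\sigma)^2}$ used in Eq.~\eqref{eq:neg2},
\begin{align*}
\abs{\Delta^2_\mathrm{B}(\psi,\cM)-\Delta^2_\mathrm{B}(\phi,\cM)}
&=2\abs{F(\cD(\psi),\mathbb{I}/N)-F(\cD(\phi),\mathbb{I}/N)}\\
&\le 2\sqrt{1-F(\cD(\psi),\cD(\phi))^2}.
\end{align*}
The crucial simplification compared with the entanglement case is that \emph{no} optimization over a unitary $U$ is required: because coherence is defined relative to a fixed basis and $\cD$ is a genuine quantum channel, monotonicity of the fidelity under $\cD$ gives $F(\cD(\psi),\cD(\phi))\ge F(\psi,\phi)=\abs{\iinner{\psi}{\phi}}$ directly. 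Combining this with $\norm{\psi-\phi}_2=\sqrt{2-2\abs{\iinner{\psi}{\phi}}^2}$ yields
\begin{align*}
\abs{\Delta^2_\mathrm{B}(\psi,\cM)-\Delta^2_\mathrm{B}(\phi,\cM)}\le 2\sqrt{1-\abs{\iinner{\psi}{\phi}}^2}=\sqrt2\,\norm{\psi-\phi}_2,
\end{align*}
so the Lipschitz constant of $\cG$ may be taken as $\eta=\sqrt2$.

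The final step is to feed $\eta=\sqrt2$ into L\'evy's lemma, Eq.~\eqref{eq:levy-lemma}, with the sphere dimension $k+1=2N$ (a pure state in $\complex^N$ is a unit vector in $\real^{2N}$). This gives the exponent $-(2N)\varepsilon^2/\big(9\pi^3\cdot 2\cdot\ln2\big)=-N\varepsilon^2/(9\pi^3\ln2)$, centered at $2-\sqrt N\,B(\tfrac12,N)$, which is exactly the claimed bound.

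I expect the only delicate point to be this Lipschitz estimate, and within it the sharp constant $\sqrt2$, which hinges on the (non-squared) fidelity inequality together with the channel-monotonicity step $F(\cD(\psi),\cD(\phi))\ge\abs{\iinner{\psi}{\phi}}$ that here replaces the Uhlmann/Vidal unitary optimization of the entanglement case; everything else is bookkeeping. If one prefers to bypass the fidelity inequality entirely, the elementary estimate $\abs{\sum_i(\abs{\psi_i}-\abs{\phi_i})}\le\sum_i\abs{\psi_i-\phi_i}\le\sqrt N\,\norm{\ket\psi-\ket\phi}_2$ (reverse triangle inequality followed by Cauchy--Schwarz) gives $\abs{\cG(\psi)-\cG(\phi)}\le 2\norm{\ket\psi-\ket\phi}_2$ with respect to the Euclidean vector norm, which already yields exponential concentration of the same form (up to the standard factor relating the Euclidean and Hilbert--Schmidt norms of pure-state differences).
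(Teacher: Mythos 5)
Your proof is correct and follows essentially the same route as the paper's: the paper likewise reduces the proposition to the Lipschitz bound $\left|\Delta^2_\mathrm{B}(\psi,\cM)-\Delta^2_\mathrm{B}(\phi,\cM)\right|\le\sqrt{2}\,\norm{\psi-\phi}_2$ (stated there only as ``similar to the $\sD^2_\mathrm{B}(\psi_{AB},\cM_E)$ case'') and then applies L\'evy's lemma with the mean $2-\sqrt{N}\,B\bigl(\tfrac{1}{2},N\bigr)$ computed just before the proposition. Your derivation of that bound via monotonicity of fidelity under the dephasing channel is precisely the natural instantiation of the paper's omitted ``similar'' step---it replaces the Uhlmann/Vidal unitary optimization used in the entanglement case---and it produces the same Lipschitz constant $\sqrt{2}$ and the same exponent.
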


\begin{proof}
Consider the map $\cG:\ket{\psi}\to \Delta^2_\mathrm{B}(\psi, \cM)$. Then, similar to the
$\sD^2_\mathrm{B}(\psi_{AB},\cM_E)$  case,
\begin{align*}
\left|\Delta^2_\mathrm{B}(\psi, \cM)-\Delta^2_\mathrm{B}(\phi, \cM)\right|\leq\sqrt{2}||\psi-\phi||_2.
\end{align*}
Thus, the Lipschitz constant for the function $\cG:\ket{\psi}\to \Delta^2_\mathrm{B}(\psi, \cM)$ can be taken to be $\sqrt{2}$.
Now, applying L\'evy's Lemma to the function $\cG:\ket{\psi}\to \Delta^2_\mathrm{B}(\psi, \cM)$, the proof of the proposition follows.
\end{proof}
While it is natural to consider the distance between a random pure state $\ket{\psi}$ and the set $\cM$ of maximally coherent states, the distance between the diagonal part of a random pure state and the maximally mixed state can also be viewed as a quantifier to measure how far $\ket{\psi}$  is away from   the maximally coherent state. Here, the diagonal part of a random pure state $\ket{\psi}=\sum^N_{i=1}\sqrt{\lambda_i}e^{i\theta_i}\ket{i}$ is denoted by $\psi^\md=\mathrm{diag}\set{\lambda_1, \lambda_2,\ldots, \lambda_N}$. In Ref. \cite{UttamS2015}, an explicit formula for the average trace distance between the diagonal part $\psi^\md$ of a random pure state $\ket{\psi}$ and the maximally mixed state $\frac{\mathbb{I}}{N}$ is obtained, and is given by
\begin{eqnarray}
\mathbb{E}_{\psi} \norm{\psi^\md-\frac{\mathbb{I}}{N}}_1=2\left(1-\frac{1}{N}\right)^N.
\end{eqnarray}
In limit $N\to\infty$, $\mathbb{E}_{\psi} \norm{\psi^\md-\frac{\mathbb{I}}{N}}_1 \to \frac{2}{e}$. We wonder if there is any connection between the quantity $\norm{\psi^\md-\frac{\mathbb{I}}{N}}_1$ and the three distances that we have defined in the preceding paragraphs. Similar to the entanglement case, we find that   $\Delta^2_\mathrm{B}(\psi, \cM)$ is a lower bound of $\norm{\psi^\md-\frac{\mathbb{I}}{N}}_1$.
Using Eq. (\ref{lem:1v2}) with $\rho=\psi^\md$ and $\sigma=\frac{\mathbb{I}}{N}$, we obtain
\begin{eqnarray*}
\norm{\psi^\md-\frac{\mathbb{I}}{N}}_1 \geq \Delta^2_\mathrm{B}(\psi, \cM).
\end{eqnarray*}
Thus,
\begin{eqnarray*}
\mathbb{E}_{\psi}\norm{\psi^\md-\frac{\mathbb{I}}{N}}_1\geq \mathbb{E}_{\psi}\Delta^2_\mathrm{B}(\psi, \cM).
\end{eqnarray*}
As we have elaborated, the average distances between a random pure state and the set of the maximally coherent states, namely, $\Delta^{\mathrm{RMS}}_{\mathrm{Tr}}$ (similarly $\Delta^{\mathrm{RMS}}_{\mathrm{HS}}$ and $\Delta^{\mathrm{RMS}}_{\mathrm{B}}$) and $\mathbb{E}_{\psi} \norm{\psi^\md-\frac{\mathbb{I}}{N}}_1$ do not tend to zero in the limit $N\to\infty$. This is consistent with the fact that $\frac{\mathbb{E}_\psi C_{l_1}(\psi)}{C^{max}_{l_1}}=\frac{\pi}{4}$ for any dimension $N$.

\section{Average $\alpha$-classical purity and its typicality }
\label{aver_puri}
The classical purity $P_{\mathrm{cl}}(\rho)$ with respect to a fixed measurement basis $\{\ket{i}\}_i$ of a state $\rho$ is defined as $P_{\mathrm{cl}}(\rho):= \mathrm{Tr}[(\rho^\md)^2]$, where $\rho^\md=\sum_i \bra{i}\rho\ket{i}\ket{i}\bra{i}$ is the diagonal part of $\rho$ in the basis $\{\ket{i}\}_i$. The classical purity has been shown to be related to the $l_1$ norm of coherence \cite{Cheng2015}. Moreover, in catalytic coherence transformations as in Ref. \cite{Bu2015b} the functions of the form $P_{\mathrm{cl}}^\alpha(\rho):= \mathrm{Tr}[(\rho^\md)^\alpha]$ appear naturally. We call $P_{\mathrm{cl}}^\alpha(\rho)$ $\alpha$-classical purity of $\rho$ and it can be seen as a generalization of the classical purity.
%it has been shown that for an otherwise impossible state transformation from an initial state $\ket{\psi}$ to a final state $\ket{\phi}$ under incoherent operations there exists a catalyst such that the transformation becomes possible with the aid of it if and only if the following conditions are satisfied:
%\begin{align}
%&(a)~\frac{\mathrm{Tr}[(\psi^\md)^\alpha]}{1-\alpha}\geq\frac{ \mathrm{Tr}[(\phi^\md)^\alpha]}{1-\alpha} ~\forall~ \alpha\in(0,1)\cup(1,%+\infty),\nonumber\\
%&(b)~\prod^N_{i=1}\psi^\md_i\geq \prod^N_{i=1}\phi^\md_i,\nonumber\\
%&(c)~S(\psi^\md)\geq S(\phi^\md),\nonumber
%\end{align}
%where $S(\psi^\md)=-\sum^N_{i=1}\psi^\md_i\ln \psi^\md_i$ is the Shannon entropy.
Here we calculate the expected value of the $\alpha$-classical purity of pure states for  $\alpha\in(0,1)\cup(1,+\infty)$. Let $\psi^\md =\{\lambda_1,\ldots\lambda_N\}$ denotes the diagonal part of the pure state $\ket{\psi}$ in a fixed basis, then the expected $\alpha$-classical purity is given by
\begin{eqnarray}
\mathbb{E}_{\psi}\mathrm{Tr}[(\psi^\md)^{\alpha}]&=&\sum^N_{i=1} \int p_c(\mathbf{\Lambda}) \lambda^{\alpha}_i
d\mathbf{\Lambda}\nonumber\\
&=&\frac{\Gamma(\alpha+1)\Gamma(N+1)}{\Gamma(\alpha+N)}.
\end{eqnarray}
Here $p_c(\mathbf{\Lambda})d\mathbf{\Lambda}$ is given by Eq. (\ref{eq:prob-dis}). For $\alpha>1$, $\mathbb{E}_{\psi}\mathrm{Tr}[(\psi^\md)^{\alpha}]$ tends
to zero as $N$ increases to infinity. Moreover, amazingly, we will show that $\mathrm{Tr}[(\psi^\md)^\alpha]$ will concentrate to $0$ for $\alpha>1$ as $N\rightarrow\infty$. First, the variance of $\mathrm{Tr}[(\psi^\md)^\alpha]$
can be expressed as
\begin{eqnarray*}
\mathbb{D}_{\psi}\mathrm{Tr}[(\psi^\md)^{\alpha}]
=\mathbb{E}_{\psi}\Pa{\mathrm{Tr}[(\psi^\md)^{\alpha}]}^2-\Pa{\mathbb{E}_{\psi}\mathrm{Tr}[(\psi^\md)^{\alpha}]}^2,
\end{eqnarray*}
and
\begin{eqnarray*}
&&\mathbb{E}_{\psi}\Pa{\mathrm{Tr}[(\psi^\md)^{\alpha}]}^2\\
&&=\sum_i \int p_c(\mathbf{\Lambda}) \lambda^{2\alpha}_i
d\mathbf{\Lambda}+
\sum_{i\neq j}\int p_c(\mathbf{\Lambda}) \lambda^\alpha_i\lambda^\alpha_j
d\mathbf{\Lambda}\\
&&=\frac{\Gamma(N+1)}{\Gamma(N+2\alpha)}[\Gamma(2\alpha+1)+(N-1)\Gamma(\alpha+1)^2].
\end{eqnarray*}
Thus
\begin{eqnarray*}
\mathbb{D}_{\psi}\mathrm{Tr}[(\psi^\md)^{\alpha}]
&=&\frac{\Gamma(N+1)}{\Gamma(N+2\alpha)}[\Gamma(2\alpha+1)+(N-1)\Gamma(\alpha+1)^2]\\
&&-
\Gamma(N+1)^2\frac{\Gamma(\alpha+1)^2}{\Gamma(\alpha+N)^2}
\end{eqnarray*}
It is easy to see that for $\alpha>1$, $\mathbb{D}_{\psi}\mathrm{Tr}[(\psi^\md)^{\alpha}]\to 0,~\mathrm{as}~N~\to \infty$. Thus, by the Chebyshev inequality \cite{Kallenberg2006}, $\mathrm{Tr}[(\psi^\md)^{\alpha}]$ concentrates to $0$ as $N\to\infty$. However, for $0<\alpha<1$
%\begin{eqnarray}
$\mathbb{D}_{\psi}\mathrm{Tr}[(\psi^\md)^{\alpha}]\to\infty$ as $N\to \infty$.
%\end{eqnarray}
Thus, as the variance  measures the spread of a set of data around the average (mean) value,  it seems that there is no concentration phenomenon  for $\alpha$-classical purity in $\alpha\in(0,1)$.

\section{Conclusion}\label{sec:con}
In this work, we have investigated the average distance of Haar distributed pure quantum states from the set of maximally entangled and coherent states in order to delineate the typical properties of the average entanglement and average $l_1$ norm of coherence. The average distance of Haar distributed bipartite pure quantum states from the set of maximally entangled states is typically nonzero in the limit of larger Hilbert space dimensions which is unexpected. We further demonstrate the typicality of $l_1$ norm of  coherence scaled by the maximum value of the $l_1$ norm of coherence by finding $\mathbb{E}_\psi C_{l_1}(\psi)/C^{max}_{l_1}=\frac{\pi}{4}$ and utilizing L\'evy's lemma. This is in contrast to the case of relative entropy of coherence for which $\mathbb{E}_\psi C_r(\psi)/C^{max}_r\sim 1 ~\mathrm{as}~N\to\infty$, where $C_r(\psi)=-\mathrm{Tr}\left(\psi^{(\mathrm{d})}\ln \psi^{(\mathrm{d})}\right)$ with $\psi^{(\mathrm{d})}$ being the diagonal part of $\ket{\psi}$ i
 n the gi
 ven reference basis.

Here we have obtained the average $l_1$ norm of coherence of random pure states, however, the average $l_1$ norm of coherence for random mixed states is still unknown and it will be interesting to obtain it.  Also, for pure states, we have following relationship $C_{l_1}(\psi)\geq \max\set{C_r(\psi), 2^{C_r(\psi)}-1}$ \cite{Swapan2016} between the $l_1$ norm of coherence and the relative entropy of coherence, but we don't know whether this relationship holds in the case of mixed states too. It will be insightful to pursue this problem in future.

\smallskip
\noindent
\begin{acknowledgments}
U.S. acknowledges a research fellowship of Department of Atomic Energy, Government of India. L.Z. acknowledges the National Natural Science
Foundation of China (No.11301124) for support. J.W. is supported by the Natural Science Foundations of China (Grants No.11171301 and No. 10771191) and the Doctoral Programs Foundation of the Ministry of Education of China (Grant No. J20130061).
\end{acknowledgments}

\bibliographystyle{apsrev4-1}
 \bibliography{aver-coh-lit}

\end{document}